\documentclass[10pt]{article}
%
%
\usepackage{amsmath}
\usepackage{amssymb}
\usepackage{amsthm}
\usepackage{enumerate}
\usepackage{epsf}
\usepackage{psfrag}
\usepackage{mathrsfs}
\usepackage{hyperref}
\usepackage[dvips]{graphicx}
\DeclareGraphicsExtensions{.eps,.art,.ART,.ps}
%
%
%

\newcommand{\bbR}{\mathbb{R}}      
\newcommand{\bbZ}{\mathbb{Z}}      

\newcommand{\grad}{\operatorname{grad}}

\newcommand{\dive}{\operatorname{div}}

\newcommand{\cotan}{\operatorname{cotan}}
%
%
%
\newtheorem{Thm}{Theorem}[section]
\newtheorem{Prop}[Thm]{Proposition}

\newtheorem{Def}[Thm]{Definition}

\newtheorem{Remark}[Thm]{Remark}

%
%
%
\begin{document}
\title{Elastic shocks in relativistic rigid rods and balls}
\author{Jo\~{a}o L.~Costa$^{1,2}$ and Jos\'e Nat\'ario$^{2}$\\ \\
{\small $^1$ ISCTE - Instituto Universit\'{a}rio de Lisboa, Lisboa, Portugal}\\
{\small $^2$ CAMGSD, Departamento de Matem\'{a}tica, Instituto Superior T\'{e}cnico,}\\
{\small Universidade de Lisboa, Portugal}
}
\date{}
\maketitle
\begin{abstract}
We study the free boundary problem for the``hard phase" material introduced by Christodoulou in \cite{Christodoulou95}, both for rods in $(1+1)$-dimensional Minkowski spacetime and for spherically symmetric balls in $(3+1)$-di\-men\-sio\-nal Minkowski spacetime. Unlike Christodoulou, we do not consider a ``soft phase", and so we regard this material as an elastic medium, capable of both compression and stretching. We prove that shocks must be null hypersurfaces, and derive the conditions to be satisfied at a free boundary. We solve the equations of motion of the rods explicitly, and we prove existence of solutions to the equations of motion of the spherically symmetric balls for an arbitrarily long (but finite) time, given initial conditions sufficiently close to those for the relaxed ball at rest. In both cases we find that the solutions contain shocks if and only if the pressure or its time derivative do not vanish at the free boundary initially. These shocks interact with the free boundary, causing it to lose regularity.
\end{abstract}
\tableofcontents
%
%
%
%
\section{Introduction}\label{section0}
In \cite{Christodoulou95}, Christodoulou introduced a two-phase fluid model aiming to describe stellar collapse with the possible formation of a neutron star. This model comprised a dust ``soft phase", representing the uncompressed stellar material, and a fluid ``hard phase", representing stellar material compressed to nuclear densities. The transition between the two phases happened at a certain threshold density, and it was in fact the main focus of research within this model (see \cite{Christodoulou96, Christodoulou96b, CL16}; see also \cite{FS19} for results concerning equilibrium states).

In this work, we study the free boundary problem for the ``hard phase" material surrounded by vacuum, both in $(1+1)$-dimensional and in $(3+1)$-dimensional Minkowski spacetime under spherical symmetry. Unlike Christodoulou, we do not consider a ``soft phase", and so we regard this material as an elastic medium, capable of both compression and stretching (that is, we allow the pressure to become negative). To make this distinction clear, we will refer to our model as a rigid elastic fluid. This problem is arguably the simplest relativistic model of an extended body, and may also help model neutron star crusts, which are expected to display elastic behavior (see for instance \cite{CH08}; several relativistic elastic models for compact objects have been considered in the literature, both as to the existence of equilibrium states \cite{Park00, KS04, FK07, ABS08, ABS09, BCV10, AC14} and as to their dynamical evolution \cite{Magli97, Magli98, BW07, AOS16, BM17}). A drawback of discarding the ``soft phase" is that we do have to allow for the possible occurrence of material singularities, where the stretching becomes infinite and the elastic medium tears itself apart. As we shall see, this cannot happen in the case of rods in $(1+1)$-dimensional Minkowski spacetime, but it may happen for spherically symmetric balls in $(3+1)$-dimensional Minkowski spacetime (see Appendix~\ref{appendixA}).

We now describe in detail the results in this paper. In Section~\ref{section1} we review the model for an irrotational rigid elastic fluid introduced as the ``hard phase" material in \cite{Christodoulou95}. In Section~\ref{section2} we show that the Rankine-Hugoniot conditions for the conservation of particles, energy and momentum across a hypersurface require that shocks, defined as hypersurfaces where the material's density, pressure and velocity are discontinuous (see Remark~\ref{Remark_on_shocks}), be null hypersurfaces. We also derive the constraints imposed by the Rankine-Hugoniot conditions at a free boundary. In Section~\ref{section4} we explicitly determine the motion of a rigid elastic rod, modeled as a segment of rigid elastic fluid moving in $(1+1)$-dimensional Minkowski spacetime. In Section~\ref{section5} we study the motion of a rigid elastic ball, modeled as a ball of rigid elastic fluid moving in $(3+1)$-dimensional Minkowski spacetime, under the assumption of spherical symmetry. We prove the existence of solutions to the equations of motion for an arbitrarily long (but finite) time, given initial conditions sufficiently close to those for the relaxed ball at rest. In both cases we find that the solutions contain shocks if and only if the pressure or its time derivative do not vanish initially at the free boundary. 
%
%
\section{Irrotational rigid elastic fluid model}\label{section1}
In this section we review the model for an irrotational rigid elastic fluid introduced as the ``hard phase" in \cite{Christodoulou95}. We start by considering the energy-momentum tensor of a massless scalar field $\phi:M \to \bbR$ on a given time-oriented Lorentzian manifold $(M,g)$,
\begin{equation}
T_{\mu\nu} = \partial_\mu \phi \, \partial_\nu \phi - \frac12 (\partial_\alpha \phi \, \partial^\alpha \phi) \, g_{\mu\nu}.
\end{equation}
If $\grad\phi$ is timelike and future-pointing then there exists a unit timelike future-pointing vector field $U$ and a positive function $n$ such that
\begin{equation}
\partial_\mu \phi = n U_\mu,
\end{equation}
and so
\begin{equation}
T_{\mu\nu} = n^2 U_\mu U_\nu + \frac12 n^2 g_{\mu\nu}.
\end{equation}
This is the energy-momentum tensor of a perfect fluid with $4$-velocity $U$,
\begin{equation}
T_{\mu\nu} = (\rho + p) U_\mu U_\nu + p g_{\mu\nu},
\end{equation}
whose density $\rho$ and pressure $p$ satisfy
\begin{equation}
\rho = p = \frac{n^2}2.
\end{equation}
The equation $p = \rho$ is the equation of state for a {\em stiff fluid}: the speed of sound (see \cite{Christodoulou07}) is equal to the speed of light,
\begin{equation}
\frac{dp}{d\rho}=1,
\end{equation}
but the pressure is always positive, that is, the medium is always compressed and never relaxed. To allow for the possibility of zero or even negative pressure, we consider the modified energy-momentum tensor
\begin{align}
T_{\mu\nu} = \partial_\mu \phi \, \partial_\nu \phi - (\frac12 \partial_\alpha \phi \, \partial^\alpha \phi) \, g_{\mu\nu} - \frac12 g_{\mu\nu} = n^2 U_\mu U_\nu + \frac12 (n^2-1) g_{\mu\nu}.
\end{align}
We can regard it as the energy-momentum tensor for a perfect fluid with density $\rho$ and pressure $p$ satisfying
\begin{equation}
\begin{cases}
\rho = \frac12 (n^2 + 1) \\
p =  \frac12 (n^2 - 1)
\end{cases},
\end{equation}
that is, with equation of state
\begin{equation}
p = \rho  - 1.
\end{equation}
This is the ``hard phase" material introduced by Christodoulou in \cite{Christodoulou95}: the speed of sound is still equal to the speed of light, but the material can either be compressed ($n>1$), relaxed ($n=1$) or streched ($n<1$).\footnote{Note that in~\cite{Christodoulou95} this material was not allowed to become stretched, but instead transitioned to the ``soft phase" when $n=1$. When the material is allowed to stretch, the relaxed state $n=1$ is indeed a minimum of the energy per particle $\frac{\rho}{n} = \frac12 \left(n+\frac1n\right)$.} The rest density when relaxed is $\rho=1$, which we can always assume by choice of units. We shall call this material a {\em rigid elastic fluid}: fluid because its internal energy only depends on its number density $n$ (see Appendix~\ref{appendixB}), elastic because it allows for stretching, and rigid because the speed of sound equals the speed of light (following the tradition in \cite{HM52, McCrea52, Bento85, N14}).

The equation of motion for the scalar field is still the massless wave equation,
\begin{equation} \label{energymomentumconservation}
\nabla^\mu T_{\mu\nu} = 0 \Leftrightarrow \Box \phi = 0.
\end{equation}

If we define the vector field
\begin{equation}
I_{\mu} = n U_\mu = \partial_\mu \phi,
\end{equation}
then the equation of motion for the scalar field implies
\begin{equation} \label{numberconservation}
\nabla^\mu I_{\mu} = \Box \phi = 0,
\end{equation}
that is, this vector is a conserved current. In fact, it can be interpreted as the {\em number density vector}, associated to the conserved number of particles in the rigid elastic fluid. Therefore, if $\grad \phi$ ever becomes null (which may perfectly well happen for the wave equation), that is, if ever $n \to 0$, this should be interpreted as a {\em material singularity}, corresponding to an infinite streching (see Appendix~\ref{appendixA}). Note that the equation of state $p=\rho-1$ implies the dominant energy condition if and only if
\begin{equation}
\rho + p \geq 0 \Leftrightarrow p \geq - \frac12,
\end{equation}
which is precisely the condition $n^2 \geq 0$.

Finally, note that the Frobenius theorem states that locally the condition
\begin{equation}
U = \frac1n d\phi
\end{equation}
for some functions $n, \phi$ is equivalent to
\begin{equation}
U \wedge dU = 0.
\end{equation}
In other words, taking $U$ to be proportional to $\grad \phi$ is equivalent to taking $U$ to be irrotational (that is, to have zero vorticity). This happens automatically in $(1+1)$ dimensions, and also in spherical symmetry: for instance, in the later case we have
\begin{equation}
U = U_0(t,r) dt + U_r(t,r) dr.
\end{equation}

We summarize the results in this section as follows:

\begin{Prop}
Any solution $\phi:M\to \bbR$ of the wave equation on a time-oriented Lorentzian manifold $(M,g)$ can be thought of as representing an irrotational rigid elastic fluid in the region where its gradient is timelike and future-pointing. The fluid's number density vector is precisely $\grad \phi$, and its $4$-velocity $U$, density $\rho$ and pressure $p$ are given (in appropriate units) by
\begin{align}
& U = \frac1{n} \grad \phi; \\
& \rho = \frac12 (n^2 + 1); \\
& p =  \frac12 (n^2 - 1),
\end{align}
where $n^2 = -\langle \grad \phi, \grad \phi \rangle$.
\end{Prop}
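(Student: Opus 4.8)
The plan is to assemble the identifications carried out above into a single statement, restricting attention throughout to the open region where $\grad\phi$ is timelike and future-pointing. First I would fix the convention $n^2 := -\langle \grad\phi, \grad\phi\rangle = -\partial_\alpha\phi\,\partial^\alpha\phi$, which is strictly positive precisely because $\grad\phi$ is timelike, so that $n>0$ is well defined there. Setting $U = \frac1n\grad\phi$, one computes $\langle U, U\rangle = \frac1{n^2}\langle\grad\phi,\grad\phi\rangle = -1$, so $U$ is a unit vector, and it is future-pointing since $n>0$ and $\grad\phi$ is future-pointing. This yields the pointwise decomposition $\partial_\mu\phi = n U_\mu$.

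Next I would substitute this decomposition into the modified energy-momentum tensor
\begin{equation}
T_{\mu\nu} = \partial_\mu\phi\,\partial_\nu\phi - \tfrac12(\partial_\alpha\phi\,\partial^\alpha\phi)\,g_{\mu\nu} - \tfrac12 g_{\mu\nu},
\end{equation}
using $\partial_\alpha\phi\,\partial^\alpha\phi = -n^2$, to obtain $T_{\mu\nu} = n^2 U_\mu U_\nu + \frac12(n^2-1)g_{\mu\nu}$. Comparing this with the perfect-fluid form $T_{\mu\nu} = (\rho+p)U_\mu U_\nu + p\,g_{\mu\nu}$ and matching the coefficient of $U_\mu U_\nu$ against that of $g_{\mu\nu}$ gives the two equations $\rho + p = n^2$ and $p = \frac12(n^2-1)$, which solve to $\rho = \frac12(n^2+1)$ and $p = \frac12(n^2-1)$, exactly the asserted formulas.

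It then remains to justify the descriptors ``number density vector'' and ``irrotational'' and the interpretation as a genuine fluid motion. For the number current I would set $I_\mu = \partial_\mu\phi = nU_\mu$ and note that the wave equation gives $\nabla^\mu I_\mu = \Box\phi = 0$, so $I$ is conserved and qualifies as the number density vector. For irrotationality, since $U$ is pointwise proportional to the exact one-form $d\phi$, the Frobenius condition $U\wedge dU = 0$ holds identically, so $U$ has zero vorticity. Finally, the equivalence $\nabla^\mu T_{\mu\nu}=0 \Leftrightarrow \Box\phi = 0$ recorded above shows that a solution of the wave equation automatically satisfies the fluid equations of motion.

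The content here is entirely algebraic once $\grad\phi$ is timelike, so I do not expect any serious obstacle; the only points requiring care are the sign bookkeeping in the chosen signature (ensuring $\partial_\alpha\phi\,\partial^\alpha\phi = -n^2$ with the correct sign) and the restriction to the region where $n^2 > 0$, outside of which the perfect-fluid interpretation breaks down and $U$ ceases to be defined.
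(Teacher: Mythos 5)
Your proposal is correct and follows essentially the same route as the paper: the decomposition $\partial_\mu\phi = nU_\mu$, substitution into the modified energy-momentum tensor, matching against the perfect-fluid form to read off $\rho$ and $p$, conservation of $I_\mu = \partial_\mu\phi$ via $\Box\phi=0$, and irrotationality from $U$ being proportional to the exact form $d\phi$. The paper presents this as a running derivation summarized by the Proposition rather than as a formal proof, but the content is identical to yours.
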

%
%
\section{Rankine-Hugoniot conditions}\label{section2}
In this section we show that the Rankine-Hugoniot conditions for the conservation of particles, energy and momentum across a shock (defined as a hypersurface where the material's density, pressure and velocity are discontinuous, see Remark~\ref{Remark_on_shocks}) of an irrotational rigid elastic fluid require the shock to be a null hypersurface, across which the scalar field only has to be continuous. Moreover, we derive the constraints imposed by the Rankine-Hugoniot conditions at a free boundary.

Let $\Sigma$ be a two-sided $C^2$ hypersurface, and suppose that $\phi$ is a solution of the wave equation on either side of $\Sigma$. Assume that $\phi$ is continuous across $\Sigma$ while being $C^2$ on either side, up to and including $\Sigma$ (meaning that the restriction of $\phi$ to each side of $\Sigma$ can be extended to a $C^2$ function defined on an open neighborhood of $\Sigma$), i.e.~that $\phi$ is continuous and sectionally $C^2$. The Rankine-Hugoniot conditions for the rigid elastic fluid represented by $\phi$ require the contractions of both the number density vector $I_\mu$ and the energy-momentum tensor $T_{\mu\nu}$ with the normal to $\Sigma$ to be continuous (including the points where $\Sigma$ is null).\footnote{Recall that these conditions are obtained by imposing the integral form of the conservation equations \eqref{energymomentumconservation} and \eqref{numberconservation} on ``thin'' open neighborhoods of points in $\Sigma$.} 

Consider first the case when $\Sigma$ is spacelike, and let $\{E_0,\ldots,E_n\}$ be a local orthonormal frame such that $E_0$ is the unit (timelike) future-pointing normal to $\Sigma$. We have
\begin{equation}
\langle I,E_0\rangle = E_0 \cdot \phi,
\end{equation}
and so the normal derivative $E_0 \cdot \phi$ must be continuous for the first Rankine-Hugoniot condition to hold. This is enough to ensure that the remaining conditions hold, since
\begin{equation}
T(E_0,E_0) = \frac12 (E_0 \cdot \phi)^2 + \frac12 (E_1 \cdot \phi)^2 + \ldots + \frac12 (E_n \cdot \phi)^2 + \frac12
\end{equation}
and
\begin{equation}
T(E_0,E_i) = (E_0 \cdot \phi)(E_i \cdot \phi)
\end{equation}
(our hypotheses imply that the tangential derivatives $E_i \cdot \phi$ are continuous). From the wave equation it is then clear that the second normal derivative of $\phi$ is also continuous across $\Sigma$, and so $\phi$ must in fact be $C^2$ across $\Sigma$, that is, there is no shock. Similar remarks apply to the case when $\Sigma$ is timelike.

If $\Sigma$ is null, however, things are different. Let $\{L,N,E_2,\ldots,E_n\}$ be a frame such that $L$ is a future-pointing null normal to $\Sigma$, $\{E_2, \ldots, E_n\}$ are orthonormal spacelike vector fields tangent to $\Sigma$ and $N$ is the future-pointing null vector field orthogonal to $\{E_2, \ldots, E_n\}$ and normalized by $\langle L, N\rangle = -1$. Since the null normal $L$ is also tangent to $\Sigma$, both
\begin{equation}
\langle I,L\rangle = L \cdot \phi
\end{equation}
and the components
\begin{equation}
T(L,L) = (L \cdot \phi)^2
\end{equation}
and
\begin{equation}
T(L,E_i) = (L \cdot \phi)(E_i \cdot \phi)
\end{equation}
are automatically continuous across $\Sigma$. The matrix of the metric in this frame is
\begin{equation}
(g_{\mu\nu}) =
\left(
\begin{matrix}
0 & -1 & 0 & \ldots & 0 \\
-1 & 0 & 0 & \ldots & 0 \\
0 & 0 & 1 & \ldots & 0 \\
\ldots & \ldots & \ldots & \ldots & \ldots \\
0 & 0 & 0 & \ldots & 1 \\
\end{matrix}
\right)
= (g_{\mu\nu})^{-1},
\end{equation}
and so
\begin{equation}
\langle d\phi, d\phi\rangle = -2(L \cdot \phi)(N \cdot \phi) + (E_2 \cdot \phi)^2 + \ldots + (E_n \cdot \phi)^2.
\end{equation}
Substituting in the expression of the energy-momentum tensor yields
\begin{equation}
T(L,N) = \frac12 (E_2 \cdot \phi)^2 + \ldots + \frac12 (E_n \cdot \phi)^2 + \frac12,
\end{equation}
which again is continuous across $\Sigma$. In other words, in the case when $\Sigma$ is null the Rankine-Hugoniot conditions are satisfied whenever $\phi$ is continuous across $\Sigma$. The derivatives transverse to $\Sigma$ do not have to be continuous, and $\phi$ is a weak solution of the wave equation, by a standard argument that we now recall: if $\varphi$ is a test function with sufficiently small support $K$ intersecting $\Sigma$ such that $K \setminus \Sigma$ has two connected components $K^+$ and $K^-$ then
\begin{align}
\int_{K^\pm} \phi \, \Box \varphi & = \int_{K^\pm} \dive(\phi \grad \varphi) - \int_{K^\pm} \left\langle \grad \phi, \grad \varphi \right\rangle \nonumber \\
& = \int_{\partial K^\pm} \phi (\pm L \cdot \varphi) - \int_{K^\pm} \dive(\varphi \grad \phi) + \int_{K^\pm} \varphi \, \Box \phi \\
& = \int_{\partial K^\pm} \phi (\pm L \cdot \varphi) - \int_{\partial K^\pm} \varphi (\pm L \cdot \phi), \nonumber
\end{align}
and so
\begin{align}
\int_M \phi \, \Box \varphi & = \int_{K^+} \phi \, \Box \varphi + \int_{K^-} \phi \, \Box \varphi \nonumber \\
& = \int_{\Sigma \cap K} [\phi] (L \cdot \varphi) - \int_{\Sigma \cap K} \varphi [L \cdot \phi] = 0,
\end{align}
where $[\cdot]$ denotes the jump across $\Sigma$.

We have then proven the following result:

\begin{Prop}
Let $\Sigma$ be a two-sided $C^2$ hypersurface, and suppose that $\phi$ is a solution of the wave equation on either side of $\Sigma$. Assume that $\phi$ is continuous across $\Sigma$ while being $C^2$ on either side (up to and including $\Sigma$). If the Rankine-Hugoniot conditions for the irrotational rigid elastic fluid represented by $\phi$ hold then either (i) $\Sigma$ is null, with $\phi$ a weak solution having discontinuous transverse derivatives, or (ii) $\phi$ is $C^2$ across $\Sigma$, in which case $\Sigma$ is not a shock.
\end{Prop}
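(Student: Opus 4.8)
The plan is to argue locally near an arbitrary point $x \in \Sigma$ and to organize the proof according to the causal character of $\Sigma$ at $x$, since the Rankine-Hugoniot conditions — continuity across $\Sigma$ of $\langle I,\nu\rangle$ and of $T(\nu,\cdot)$, with $\nu$ the normal — behave completely differently depending on whether $\nu$ is timelike/spacelike or null. The one observation driving both cases is that, because $\phi$ is continuous across $\Sigma$ and $C^2$ on each side up to $\Sigma$, every derivative of $\phi$ \emph{tangent} to $\Sigma$ is automatically continuous: it is obtained by differentiating the common boundary value along $\Sigma$, so the one-sided limits agree. Only derivatives transverse to $\Sigma$ are free to jump.

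First I would treat the case where $\Sigma$ is spacelike at $x$ (the timelike case being analogous, with $E_0$ now the unit spacelike normal). Using the orthonormal frame $\{E_0,\ldots,E_n\}$, the number-density condition reads $[\langle I,E_0\rangle] = [E_0\cdot\phi] = 0$, so the \emph{normal} first derivative is continuous. Since the tangential first derivatives $E_i\cdot\phi$ are already continuous, the displayed formulas for $T(E_0,E_0)$ and $T(E_0,E_i)$ show the remaining conditions hold for free. I would then bootstrap: with all first derivatives continuous, every second derivative obtained by differentiating a first derivative \emph{along} $\Sigma$ is continuous — this covers all second derivatives except the pure normal-normal one $E_0E_0\cdot\phi$ — and the wave equation $\Box\phi=0$ expresses precisely that missing derivative as a continuous combination of the others. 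Hence all second derivatives match across $\Sigma$, $\phi$ is $C^2$, and by definition $\Sigma$ is not a shock: this is alternative (ii).

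Next I would treat the case where $\Sigma$ is null at $x$. Now the null normal $L$ is \emph{tangent} to $\Sigma$, so $\langle I,L\rangle = L\cdot\phi$ and the components $T(L,L)$, $T(L,E_i)$, $T(L,N)$ computed above involve only the tangential derivatives $L\cdot\phi$ and $E_i\cdot\phi$; all of these are continuous by the tangential-derivative observation, \emph{irrespective} of the transverse derivative $N\cdot\phi$. Thus the Rankine-Hugoniot conditions hold automatically and $N\cdot\phi$ may jump. It then remains only to certify that such a $\phi$ is a genuine weak solution, which is the integration-by-parts computation already displayed: for a test function supported in a small $K$ with $K\setminus\Sigma = K^+ \sqcup K^-$, the divergence identity on each side produces boundary terms on $\Sigma$ that combine into $\int_{\Sigma\cap K}[\phi](L\cdot\varphi) - \int_{\Sigma\cap K}\varphi[L\cdot\phi]$, and both jumps vanish ($[\phi]=0$ by continuity, $[L\cdot\phi]=0$ since $L$ is tangential). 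This is alternative (i).

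The only genuinely delicate points are bookkeeping rather than conceptual, and the crux of the whole dichotomy sits in the bootstrap step. In the non-null case I must ensure it really delivers $C^2$ and not merely $C^1$ regularity; the essential input is that the wave operator is \emph{non-characteristic} at a non-null $\Sigma$, so its principal part contains the normal-normal derivative with nonzero coefficient and can be solved for it. At a null $\Sigma$ the operator is characteristic, this step fails, and that is exactly why shocks are permitted there — so the whole statement hinges on the characteristic/non-characteristic distinction. I would also stress that the argument is pointwise in the causal character of $\Sigma$, so the two alternatives are really the reading: $\phi$ is $C^2$ across $\Sigma$ except possibly on the subset where $\Sigma$ is null, and a true shock can occur only on that subset; the clean global dichotomy in the statement is the case of constant causal character. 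Finally, the weak-solution verification needs $K$ small enough that $\Sigma\cap K$ is a two-sided piece with exactly two complementary components, which is guaranteed since $\Sigma$ is a $C^2$ embedded hypersurface.
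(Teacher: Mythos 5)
Your proposal is correct and follows essentially the same route as the paper: the same frame-based case analysis (spacelike/timelike normal versus null), the same observation that tangential derivatives of a function continuous across $\Sigma$ and sectionally $C^2$ are automatically continuous, the same bootstrap via $\Box\phi=0$ to recover the normal-normal second derivative in the non-null case, and the same integration-by-parts verification of the weak-solution property in the null case. Your added remarks on the characteristic/non-characteristic dichotomy and on pointwise causal character are a nice gloss but do not change the argument.
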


Another important result related to the Rankine-Hugoniot conditions is the following:

\begin{Prop} \label{RHfree}
Let $\Sigma$ be a $C^1$ {\em free boundary} of the irrotational rigid elastic fluid, that is, a timelike $C^1$ hypersurface separating the fluid (represented by a solution $\phi$ of the wave equation of class $C^2$ up to and including $\Sigma$) from the vacuum. If the Rankine-Hugoniot conditions hold then (i) $\grad \phi$ is tangent to $\Sigma$ and (ii) $\langle \grad \phi, \grad \phi \rangle = -1 \Leftrightarrow p=0$ on $\Sigma$.
\end{Prop}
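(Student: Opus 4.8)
The plan is to adapt the frame computation already used for the spacelike and null cases, the genuinely new ingredient being the correct formulation of the matching to vacuum. Since $\Sigma$ is timelike, I would fix a local orthonormal frame $\{E_0,E_1,\ldots,E_n\}$ along $\Sigma$ in which $E_1=\nu$ is the unit (spacelike) normal, so that $\langle \nu,\nu\rangle=1$, and $E_0,E_2,\ldots,E_n$ are tangent to $\Sigma$, with $E_0$ timelike. The step I expect to require the most care is conceptual rather than computational: the exterior region is genuine vacuum, where both the number density vector and the energy-momentum tensor vanish identically. This is \emph{not} the same as evaluating the fluid formula at $\phi=0$, which would spuriously leave behind the cosmological-constant-like term $-\frac12 g_{\mu\nu}$. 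Once this is settled, the Rankine-Hugoniot conditions, namely continuity of $\langle I,\nu\rangle$ and of the covector $T(\nu,\cdot)$ across $\Sigma$, reduce to the requirement that these normal contractions vanish on the fluid side of $\Sigma$.

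I would then read off (i) from particle conservation. Since $\langle I,\nu\rangle = \nu\cdot\phi = E_1\cdot\phi$, the condition $\langle I,\nu\rangle=0$ says exactly that the normal derivative of $\phi$ vanishes, i.e.\ that $\grad\phi$ is orthogonal to $\nu$ and hence tangent to $\Sigma$. This is statement (i).

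For (ii) I would turn to the energy-momentum conditions. Because $\nu=E_1$ is orthogonal to $E_0$ and to each $E_i$ with $i\geq 2$, the off-diagonal components collapse to $T(\nu,E_0)=(\nu\cdot\phi)(E_0\cdot\phi)$ and $T(\nu,E_i)=(\nu\cdot\phi)(E_i\cdot\phi)$, which vanish automatically once (i) is in hand and so impose no new constraint. The only nontrivial requirement is therefore $T(\nu,\nu)=0$. Using (i) to drop the $(\nu\cdot\phi)^2$ term, and recalling $\langle\nu,\nu\rangle=1$ together with $n^2=-\langle\grad\phi,\grad\phi\rangle$, a one-line computation gives $T(\nu,\nu)=-\frac12\langle\grad\phi,\grad\phi\rangle-\frac12=\frac12(n^2-1)=p$, so the vanishing of the normal flux is equivalent to $p=0$ on $\Sigma$, which is in turn equivalent to $\langle\grad\phi,\grad\phi\rangle=-1$. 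This yields (ii) and closes the argument. The real obstacle is thus the physical set-up of the first paragraph: identifying the matching as one to true vacuum is precisely what makes the otherwise innocuous constant term in $T_{\mu\nu}$ produce the sharp condition $p=0$ instead of a trivial identity.
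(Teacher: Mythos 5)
Your proposal is correct and follows essentially the same route as the paper's proof: the paper likewise observes that, since both the number density vector and the energy-momentum tensor vanish identically in the vacuum exterior, the Rankine-Hugoniot conditions force $\langle I,\nu\rangle$ and $T(\nu,\cdot)$ to vanish on the fluid side, which is then equivalent to $\grad\phi$ being tangent to $\Sigma$ together with $p=\frac12\left(-\langle\grad\phi,\grad\phi\rangle-1\right)=0$. The only difference is that you spell out the frame computation that the paper dismisses as ``easily seen,'' which is a faithful filling-in of detail rather than a different argument.
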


\begin{proof}
The Rankine-Hugoniot conditions require, in this case, that the contractions of both the number density vector $I_\mu$ and the energy-momentum tensor $T_{\mu\nu}$ of the fluid with the spacelike normal to $\Sigma$ vanish, since both the number density vector and the energy-momentum tensor are zero on the exterior. This is easily seen to be equivalent to $\grad \phi$ being tangent to $\Sigma$ plus the vanishing of the pressure $p = \frac12 \left(- \langle \grad \phi, \grad \phi \rangle - 1 \right)$ on $\Sigma$.
\end{proof}

In view of the results above, we finish this section with the following definition:

\begin{Def} \label{motion}
Let $(M,g)$ be a time-oriented Lorentzian manifold. An {\em irrotational motion} of a rigid elastic fluid is a pair $(\Omega,\phi)$, where $\Omega \subset M$ is a region with a sectionally $C^1$ timelike boundary and $\phi: \overline{\Omega} \to \bbR$ is a continuous, sectionally $C^2$ weak solution of the wave equation such that $\grad \phi$ is timelike, tangent to $\partial\Omega$ and satisfying $\langle \grad \phi, \grad \phi \rangle = -1$ on the $C^1$ components of $\partial\Omega$. We also require that the number of shocks intersecting any compact subregion of $\Omega$ is finite, where by {\em shocks} we mean null hypersurfaces $\Sigma$ on which $\phi$ is continuous but not $C^2$ (although it is $C^2$ on each side of $\Sigma$, up to and including $\Sigma$).
\end{Def}

\begin{Remark} \label{Remark_on_shocks}
Recall that a {\em shock} is usually defined as a hypersurface of discontinuity across which there is a nonzero flux of particles, by opposition to a {\em contact discontinuity} \cite{CF48}; when the discontinuity hypersurfaces are null, as above, this is always the case. Note, however, that the shocks studied in this paper are necessarily reversible, because the wave equation can be solved backwards. In fact, it can be shown that there is no jump in entropy across these null shocks (see \cite{Christodoulou95}).
\end{Remark}

\begin{Remark}
As mentioned above, we will only consider situations where the motions are automatically irrotational. Consequently, we will speak simply of ``motions".
\end{Remark}

%
%
%
\section{Rigid elastic rod}\label{section4}
In this section we will study the motion of a rigid elastic rod, modeled as a segment of rigid elastic fluid moving in $(1+1)$-dimensional Minkowski spacetime. Note that in $(1+1)$ dimensions all elastic energy-momentum tensors correspond to perfect fluids (see for instance \cite{N14}), and so this is the unique possible choice for a rigid elastic rod. Examples of motions of these elastic rods were already given in \cite{HM52, McCrea52, BF03, N14}.

Define $T=-\phi$, so that $T$ is a time coordinate. Let $X$ be a coordinate along a line of constant $\phi$, and extend it to all such lines by the flow of $\grad \phi$. Note that $X$ is a comoving coordinate, that is, the lines of constant $X$ are the worldlines of the fluid's particles. Since
\begin{equation}
\left\langle dT, dT \right\rangle = \left\langle d\phi, d\phi \right\rangle = -n^2,
\end{equation}
the Minkowski metric is written in these coordinates as
\begin{equation}
ds^2 = \frac1{n^2} \left( - dT^2 + A^2 dX^2 \right)
\end{equation}
for some positive function $A$. We have
\begin{equation}
\Box T = 0 \Leftrightarrow \partial_\mu \left(\sqrt{-g} \, \partial^\mu T \right) = 0 \Leftrightarrow \frac{\partial A}{\partial T} = 0,
\end{equation}
that is, $A = A(X)$. We can therefore rescale the coordinate $X$ to make $A = 1$ and write the metric in the form
\begin{equation} \label{metric}
ds^2 = \frac1{n^2} \left( - dT^2 + dX^2 \right).
\end{equation}
Since this is the Minkowski metric, the curvature must be zero. If we set
\begin{equation}
n = e^\psi,
\end{equation}
the condition for zero curvature is
\begin{equation}
R_{0101} = 0 \Leftrightarrow - \frac{\partial^2 \psi}{\partial T^2} + \frac{\partial^2 \psi}{\partial X^2} = 0,
\end{equation}
that is, $\psi$ satisfies the one-dimensional wave equation, whence
\begin{equation}
\psi(T,X) = f(T-X) + g(T+X).
\end{equation}
Note that the Minkowski metric \eqref{metric} can then be written as
\begin{equation}
ds^2 = e^{-2\psi} \left( - dT^2 + dX^2 \right) = - e^{-2f(U)-2g(V)} dU dV,
\end{equation}
where
\begin{equation} \label{coord1}
\begin{cases}
U = T - X \\
V = T + X
\end{cases}
\Leftrightarrow
\begin{cases}
T = \frac12(U+V) \\
X = \frac12(V-U)
\end{cases},
\end{equation}
and so the coordinate transformation from the comoving coordinates $(T,X)$ to the usual Lorentz inertial coordinates $(t,x)$ can be obtained by defining
\begin{equation} \label{coord2}
\begin{cases}
u = \int e^{-2f(U)} dU \\
v = \int e^{-2g(V)} dV
\end{cases}
\end{equation}
and setting
\begin{equation} \label{coord3}
\begin{cases}
t = \frac12(u+v) \\
x = \frac12(v-u)
\end{cases}.
\end{equation}
If there are no shocks then $T = -\phi$ is a function of class $C^2$, and so the coordinate transformation \eqref{coord1}-\eqref{coord3} is also $C^2$. As a consequence, the Minkowski metric \eqref{metric} is of class $C^1$ when written in the comoving coordinates $(T,X)$, which agrees with $n^2$ being a $C^1$ function of the inertial coordinates $(t,x)$. To accomodate shocks, however, we must allow for the possibility that $n^2$ may be discontinuous across null lines, which is the same as allowing $f$ and $g$ to have discontinuities. In this case, the coordinate transformation \eqref{coord1}-\eqref{coord3} is merely continuous across the shocks, and the metric \eqref{metric} (or equivalently $n^2$) is even discontinuous. Following Definition~\ref{motion}, we will nevertheless require $f$ and $g$ to be sectionally $C^1$,\footnote{By this we mean that $f$ and $g$ are functions of class $C^1$ except for a set isolated discontinuities, where their restrictions to sufficiently small left or right open neighborhoods can be extended to $C^1$ functions defined on full open neighborhoods.} so that the coordinate transformation \eqref{coord1}-\eqref{coord3} is sectionally $C^2$, and $n^2$ is sectionally $C^1$ (as a function of both sets of coordinates).

\begin{Remark}
It is interesting to note that conformal coordinates in the Euclidean plane must be $C^2$ (in fact analytic), since weak solutions of the Laplace equation are automatically analytic (see for instance \cite{DK81}). In other words, if one takes local coordinates in the Euclidean plane which are continuous, sectionally $C^2$ functions of the usual Cartesian coordinates with discontinuous derivatives, then the Euclidean metric written in the new coordinates is never conformal to the Euclidean metric.
\end{Remark}

If we assume that the rod's free endpoints correspond to $X=0$ and $X=L$, the zero pressure condition (which applies to both endpoints) implies
\begin{equation}
n(T,0) = 1 \Leftrightarrow f(T) + g(T) = 0,
\end{equation}
whence
\begin{equation}
\psi(T,X) = f(T-X) - f(T+X) = f(U) - f(V),
\end{equation}
and
\begin{equation}
n(T,L) = 1 \Leftrightarrow f(T-L) - f(T+L) = 0,
\end{equation}
that is, $f$ is periodic with period $2L$. In particular, $f$ is bounded, and the kind of material singularities corresponding to infinite stretching do not occur in this one-dimensional model.

Let us assume that
\begin{equation}
n(0,X) \in C^1\left(\left]0,L\right[\right),
\end{equation}
or, equivalently,
\begin{equation}
\psi(0,X) = f(-X) - f(X) \in C^1\left(\left]0,L\right[\right).
\end{equation}
Then the odd part of $f$ in the interval $[-L,L]$ is in $C^1\left(\left]-L,0\right[\cup\left]0,L\right[\right)$. If in addition we assume that
\begin{equation}
\frac{\partial n}{\partial T}(0,X)  \in C^0\left(\left]0,L\right[\right),
\end{equation}
or, equivalently,
\begin{equation}
\frac{\partial\psi}{\partial T}(0,X) = f'(-X) - f'(X)  \in C^0\left(\left]0,L\right[\right),
\end{equation}
then the even part of $f$ in the interval $[-L,L]$ is in $C^1\left(\left]-L,0\right[\cup\left]0,L\right[\right)$. We conclude that in this case $\psi$ is of class $C^1$ except possibly on the set
\begin{equation} \label{shockwave}
SW = \{(T,X) \in \bbR \times \left] 0,L \right[ : U = nL \text{ or } V=nL \text{ for some } n \in \bbZ\}.
\end{equation}
At each of these values it may happen that $f$ is not continuous (and in that case $n(0,X)$ does not tend to one at either $X=0$ or $X=L$) or $f'$ is not continuous (and in that case $\frac{\partial n}{\partial T}(0,X)$ does not tend to zero at either $X=0$ or $X=L$). So one obtains shocks if and only if initially either the rod is stretched or compressed at the endpoints, or if its stretching at the endpoints is changing in time, that is, if and only if the initial data is in tension with the boundary conditions.

We may summarize the conclusions above as follows.

\begin{Thm}
Consider a rigid elastic rod, modeled as a segment of rigid elastic fluid moving in $(1+1)$-dimensional Minkowski spacetime (according to Definition~\ref{motion}). It is possible to define coordinates $(T,X)$ on the spacetime region corresponding to the rod such that the lines of constant $X$ are the worldlines of the rod's particles and the Minkowski metric reads
\begin{equation}
ds^2 = \frac1{n^2} \left( - dT^2 + dX^2 \right).
\end{equation}
If the rod's endpoints are given by $X=0$ and $X=L$ then each motion is determined by a $2L$-periodic, sectionally $C^1$ function $f:\bbR \to \bbR$ such that
\begin{equation}
n(T,X) = e^{f(T-X)-f(T+X)}.
\end{equation}
In particular, the motion is defined for all times, and the coordinates $(T,X)$ are continuous, sectionally $C^2$ functions of the inertial coordinates $(t,x)$, with discontinuous derivatives along the null lines (shocks) determined by the discontinuities of $f$, provided they exist. If $n(0,X)$ and $\frac{\partial n}{\partial T}(0,X)$ are of class $C^1$ and $C^0$ in the open interval $\left]0,L\right[$, respectively, then all shocks either emanate from the rod's endpoints at $T=0$ or correspond to reflections of these shocks at the free boundary. These shocks are present if and only if either $n(0,X)$ does not tend to $1$ or $\frac{\partial n}{\partial T}(0,X)$ does not tend to $0$ at the endpoints $x=0$ or $x=L$.
\end{Thm}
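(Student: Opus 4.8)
The plan is to assemble the construction developed in the preceding analysis into the successive assertions of the theorem, proceeding constructively from the comoving coordinates to the classification of shocks. First I would fix the coordinate system and the metric form: setting $T=-\phi$ makes $T$ a time function because $\grad\phi$ is timelike and future-pointing, while introducing $X$ as a coordinate constant on the level sets of $\phi$ and transported by the flow of $\grad\phi$ makes the lines of constant $X$ the fluid worldlines. The identity $\langle dT,dT\rangle=\langle d\phi,d\phi\rangle=-n^2$ forces the metric into the shape $\frac1{n^2}(-dT^2+A^2dX^2)$; applying $\Box T=0$ yields $A=A(X)$, and rescaling $X$ normalizes $A\equiv 1$, giving \eqref{metric}.

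Next I would exploit flatness. Since the metric is the Minkowski metric, its single curvature component $R_{0101}$ must vanish, and writing $n=e^\psi$ this condition is precisely the one-dimensional wave equation for $\psi$, so $\psi(T,X)=f(U)+g(V)$ with $U=T-X$ and $V=T+X$. I would then impose the free-boundary conditions via Proposition~\ref{RHfree}: vanishing pressure at $X=0$ and $X=L$ is equivalent to $n=1$, i.e.\ $\psi=0$, there. The condition at $X=0$ gives $g=-f$, hence $n=e^{f(U)-f(V)}$, and the condition at $X=L$ forces $f$ to be $2L$-periodic. Periodicity makes $f$ bounded, so $n$ stays bounded away from $0$ and $\infty$, which rules out the material singularities $n\to 0$ and shows the motion is defined for all $T$; one checks conversely that any $2L$-periodic, sectionally $C^1$ function $f$ yields an admissible motion in the sense of Definition~\ref{motion}, establishing the claimed parametrization.

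For the regularity and shock statements I would analyze the coordinate change \eqref{coord1}--\eqref{coord3}. When $f$ is $C^1$ this map is $C^2$ and $n^2$ is a $C^1$ function of $(t,x)$; allowing $f$ merely sectionally $C^1$ makes the map sectionally $C^2$ and $n^2$ sectionally $C^1$, with a jump of $f$ at a value $c$ producing discontinuities of $n$ along the two null lines $U=c$ and $V=c$ --- these are the shocks. To classify them under the hypotheses $n(0,X)\in C^1(]0,L[)$ and $\frac{\partial n}{\partial T}(0,X)\in C^0(]0,L[)$, I would decompose $f$ on $[-L,L]$ into even and odd parts: the first hypothesis is equivalent to $\psi(0,X)=f(-X)-f(X)$ being $C^1$, controlling the odd part, and the second to $\frac{\partial\psi}{\partial T}(0,X)=f'(-X)-f'(X)$ being $C^0$, controlling the even part. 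Together they give $f\in C^1(]-L,0[\cup]0,L[)$, so by periodicity the only possible discontinuities of $f$ or $f'$ occur at integer multiples of $L$, precisely the set $SW$ of \eqref{shockwave} --- the null lines issuing from the endpoints at $T=0$ together with their reflections off $X=0$ and $X=L$.

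Finally, the equivalence for the presence of shocks follows by reading off what a discontinuity at a multiple of $L$ means: a jump in $f$ there corresponds to $n(0,X)$ failing to tend to $1$ at an endpoint, and a jump in $f'$ corresponds to $\frac{\partial n}{\partial T}(0,X)$ failing to tend to $0$; if both limits hold, the even and odd parts extend to $C^1$ functions compatibly with the periodic extension, so no shock arises. I expect the main obstacle to be exactly this last bookkeeping step: translating the one-sided limits of the initial data at the endpoints into continuity of $f$ and $f'$ across the periodic gluing, and verifying that the even/odd decomposition interacts correctly with $2L$-periodicity so that continuity of both parts is genuinely equivalent to the stated boundary behavior. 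The differential-geometric setup, by contrast, is essentially forced at each stage.
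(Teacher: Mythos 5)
Your proposal is correct and follows essentially the same route as the paper: comoving coordinates $(T,X)$ normalized via $\Box T=0$, the flatness condition $R_{0101}=0$ giving the wave equation for $\psi=\log n$, the free-boundary conditions forcing $g=-f$ and $2L$-periodicity, and the even/odd decomposition of $f$ on $[-L,L]$ to localize discontinuities to the set $SW$ and read off the shock criterion at the endpoints. The "bookkeeping step" you flag as the main obstacle is handled in the paper exactly as you outline it, so there is no gap.
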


As an example, let us take
\begin{equation}
f(X)=
\begin{cases}
0 \qquad \text{ for } \quad -L < X < 0, \\
\varepsilon \qquad \text{ for } \quad 0 < X < L,
\end{cases}
\end{equation}
for $\varepsilon > 0$. Then $\psi$ is sectionally constant, assuming the values depicted in Figure~\ref{rod}. Note that
\begin{equation}
\begin{cases}
\psi = 0 \quad & \Rightarrow \quad p = 0; \\
\psi = \varepsilon & \Rightarrow \quad p = \frac12(e^{2\varepsilon} - 1) > 0; \\
\psi = -\varepsilon & \Rightarrow \quad p = \frac12(e^{-2\varepsilon} - 1) < 0,
\end{cases}
\end{equation}
and so this solution represents a rod that is initially stretched and is then released. Shock waves propagate at the speed of light from the free ends inwards, causing the rod to relax; as the shock waves cross, they compress the rod, and the process repeats periodically.

\begin{figure}[h!]
\begin{center}
\psfrag{T}{$T$}
\psfrag{X}{$X$}
\psfrag{0}{$0$}
\psfrag{L}{$L$}
\psfrag{2L}{$2L$}
\psfrag{e}{$\varepsilon$}
\psfrag{-e}{$-\varepsilon$}
\psfrag{t}{$t$}
\psfrag{x}{$x$}
\psfrag{eL}{$e^{\varepsilon}L$}
\psfrag{cL}{$\cosh({\varepsilon})L$}
\psfrag{2cL}{$2\cosh({\varepsilon})L$}
\epsfxsize=.4\textwidth
\leavevmode
\epsfbox{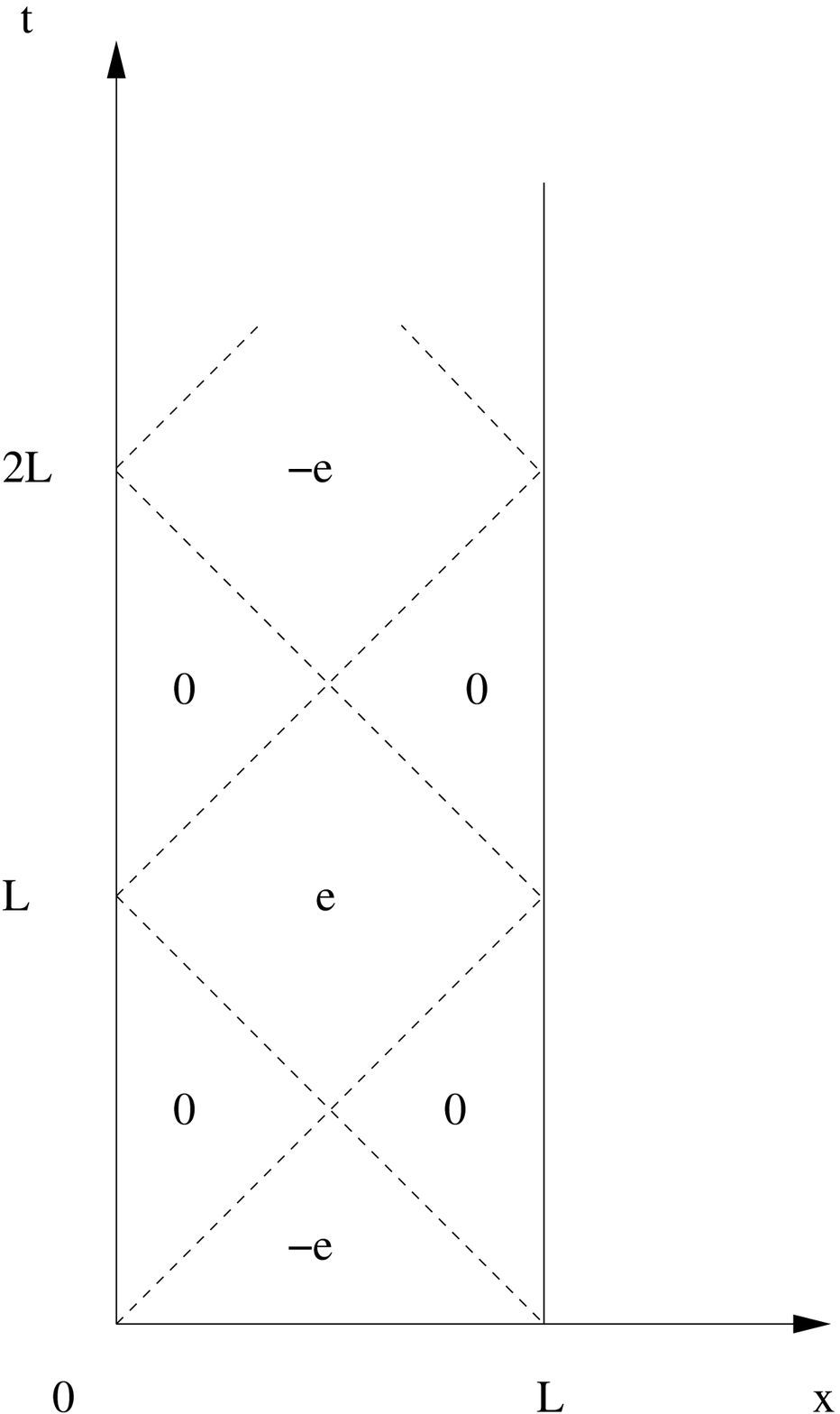}
\hspace{1cm}
\epsfxsize=.5\textwidth
\leavevmode
\epsfbox{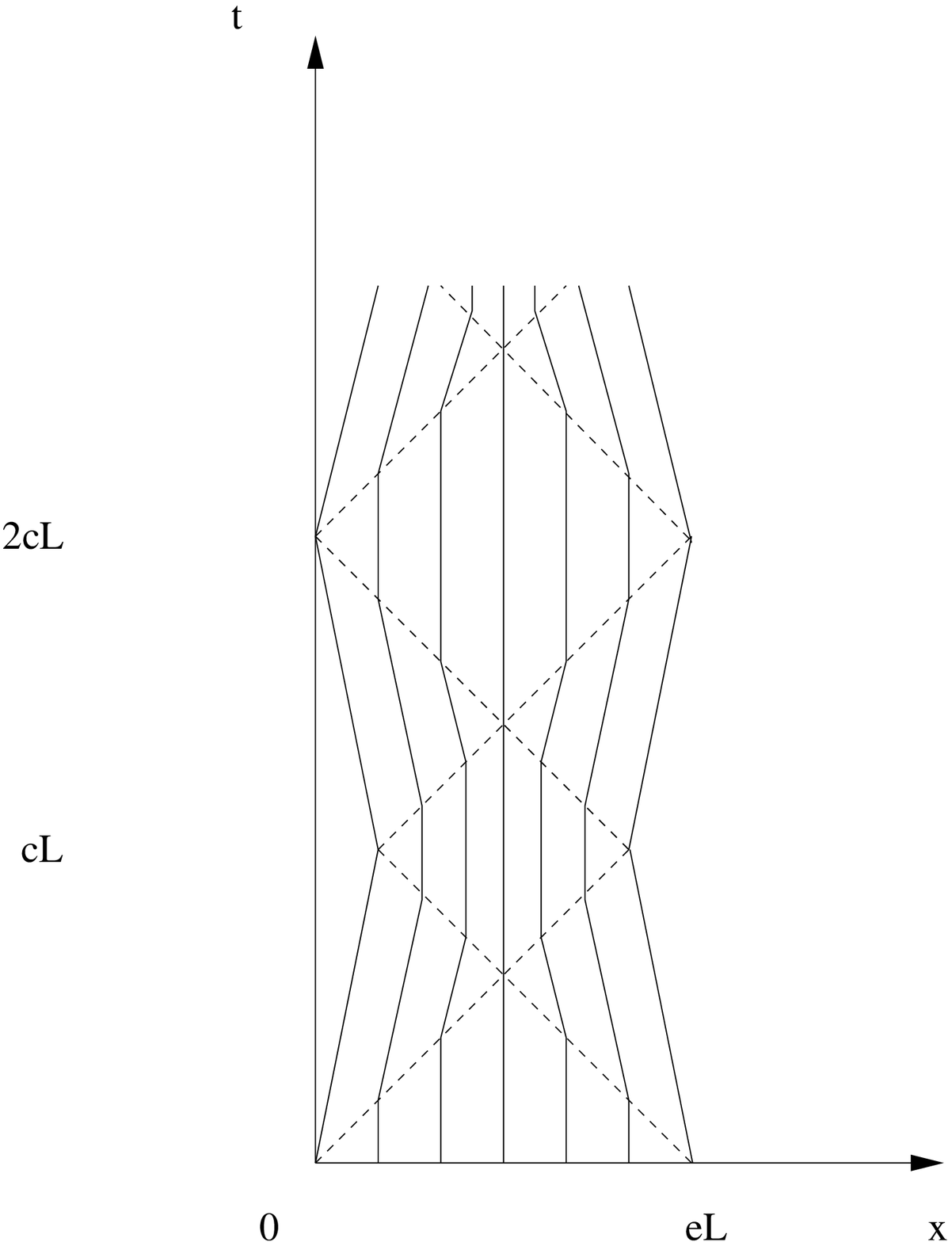}
\end{center}
\caption{Example of a rigid elastic rod motion. Left: in comoving coordinates, with the different values of $\psi$ displayed. Right: in inertial coordinates.}\label{rod}
\end{figure}

To get a picture of the rod in inertial coordinates $(t,x)$, we note that because $n^2$ is sectionally constant the worldlines of the rod's particles are sectionally geodesic. By symmetry, the central worldline (corresponding to $X=\frac{L}2$) is a full geodesic, and we choose the inertial coordinates $(t,x)$ such that this geodesic is a line of constant $x$.\footnote{Note that these coordinates are not the coordinates $(t,x)$ given by the coordinate transformation \eqref{coord1}-\eqref{coord3}, but are of course related to these by a Lorentz transformation.} From the picture in comoving coordinates $(T,X)$ it is clear that initially the rod is at rest with respect to this central line (the distance between worldlines remains initially constant), and also that the initial length of the rod is its proper length, which, in view of \eqref{metric}, is given by $e^{\varepsilon}L$. Moreover, the rod is again at rest with respect to the central line after a proper time interval (i.e.~an inertial time interval) $e^{\varepsilon} \frac{L}2 + e^{-\varepsilon} \frac{L}2 = \cosh(\varepsilon) L$, and its length is then $e^{-\varepsilon}L$. We conclude that the endpoints of the rod must have moved towards the center with speed
\begin{equation}
\frac{\frac12\left(e^{\varepsilon}L-e^{-\varepsilon}L\right)}{\cosh(\varepsilon) L} = \tanh(\varepsilon).
\end{equation}
By symmetry, this is also the speed at which the endpoints of the rod move away from the center afterwards, and therefore the rod's motion is as depicted in Figure~\ref{rod}.

%
%
\section{Rigid elastic ball}\label{section5}
In this section we will study the motion of a rigid elastic ball, modeled as a ball of rigid elastic fluid moving in $(3+1)$-dimensional Minkowski spacetime, under the assumption of spherical symmetry. These motions can also be obtained from other elastic models which are not fluids (see Appendix~\ref{appendixB}).

Under spherical symmetry, the solution of the wave equation in $(3+1)$-dimensional Minkowski spacetime is given by
\begin{equation}
\Box \phi=0 \Leftrightarrow \phi(t,r)=\frac{f(t-r)-g(t+r)}{r},
\end{equation}
where $r$ is the usual spherical coordinate. If $\phi$ is to be continuous at $r=0$ then we must have $f(t)-g(t)=0$, whence
\begin{equation}
\phi(t,r)=\frac{f(t-r)-f(t+r)}{r}.
\end{equation}
Note that the limit
\begin{equation}
\lim_{r\to 0}\phi(t,r) = -2f'(t)
\end{equation}
exists and is continuous whenever $f$ is $C^1$. In this case, we have
\begin{equation}
\partial_r\phi(t,r)=\frac{-rf'(t-r)-rf'(t+r)-f(t-r)+f(t+r)}{r^2},
\end{equation}
and the limit
\begin{equation}
\lim_{r\to 0}\partial_r\phi(t,r) = \lim_{r\to 0} \frac{rf''(t-r)-rf''(t+r)}{2r} = 0
\end{equation}
exists whenever $f$ is $C^2$. More generally, one can prove that if $f$ is $C^n$ then
\begin{equation}
\lim_{r\to 0}\partial_r^{n-1}\phi(t,r)=
\begin{cases}
0 \qquad \qquad \qquad \text{ if } n \text{ is even;} \\
-\frac{2}{n} f^{(n)}(t) \qquad \text{ if } n \text{ is odd.}
\end{cases}
\end{equation} The fact that $f$ has to be $C^n$ for $\phi$ to be $C^{n-1}$ is a consequence of the well known phenomenon of focusing and loss of derivatives in the wave equation.

In what follows we will take the function $f$ to be of class $C^2$. This means that $\grad\phi$ will be of class $C^1$ except possibly at the origin,\footnote{In view of this, the solutions described in this section only satisfy Definition~\ref{motion} away from the center of symmetry, i.e., to obtain a {\em motion of a rigid elastic fluid} we must exclude the worldline $\{r=0\}$ from $\Omega$. Alternatively, we could also extend Definition~\ref{motion} in order to accommodate this loss of regularity, but that would lead to technical complications that we prefer to avoid.} implying uniqueness of the fluid's worldlines with the possible exception of those reaching $r=0$. Now $r=0$ is clearly a worldline, and the fact that $\grad\phi$ is divergenceless guarantees that no other worldline can cross it; therefore we have uniqueness for all worldlines. In particular, the worldline representing the free boundary cannot reach $r=0$, and it is therefore a $C^2$ curve.

Solving the wave equation subject to the free boundary conditions is then equivalent to finding $C^2$ functions $f:\bbR \to \bbR$ and $r_b:\bbR \to \bbR^+$ such that
\begin{equation}
\begin{cases}
\displaystyle
\phi(t,r) = \frac{f(t-r)-f(t+r)}{r} \\ \\
\left((\partial_t\phi)^2-(\partial_r\phi)^2\right)(t,r_b(t)) = 1 \\ \\
\displaystyle
\frac{dr_b}{dt}(t)=-\frac{\partial_r\phi}{\partial_t\phi}(t,r_b(t))
\end{cases},
\end{equation}
where the last equation is the condition that $\grad \phi$ is tangent to the free boundary (see Proposition~\ref{RHfree}). Note that adding a constant to $f$ leaves $\phi$ unchanged, and adding a linear function to $f$ changes $\phi$ by an additive constant, with no impact on the physical solution. As an example, the solution corresponding to a relaxed ball of radius $R>0$,
\begin{equation}
\phi_{\rm rel}(t,r) = -t,
\end{equation}
say, can be obtained by taking
\begin{equation}
f(t)=\frac14 t^2, \qquad \qquad r_b(t)=R.
\end{equation}

In what follows it will be convenient to use the double null coordinates
\begin{equation}
\begin{cases}
u = t-r \\
v = t+r
\end{cases}
\Leftrightarrow
\begin{cases}
t = \frac12(u+v) \\
r = \frac12(v-u)
\end{cases},
\end{equation}
in terms of which the Minkowski metric is written
\begin{equation}
ds^2 = - du \, dv + r^2(u,v) \left( d\theta^2 + \sin^2\theta d\varphi^2 \right).
\end{equation}
The first $2 \times 2$ block of the inverse metric is then
\begin{equation}
\left(
\begin{matrix}
0 & -\frac12 \\
-\frac12 & 0
\end{matrix}
\right)^{-1}
=
\left(
\begin{matrix}
0 & -2 \\
-2 & 0
\end{matrix}
\right) ,
\end{equation}
yielding
\begin{equation}
\grad \phi = - 2 \, \partial_v \phi \, \frac{\partial}{\partial u} - 2 \, \partial_u \phi \, \frac{\partial}{\partial v}.
\end{equation}
In these coordinates, the problem becomes finding $C^2$ functions $f:\bbR \to \bbR$ and $u_b:\bbR \to \bbR$ such that
\begin{equation}
\begin{cases} \label{problemuv}
\displaystyle
\phi(u,v) = \frac{f(u)-f(v)}{r(u,v)} \\ \\
(\partial_u\phi \, \partial_v\phi)(u_b(v),v) = \frac14 \\ \\
\displaystyle
\frac{du_b}{dv}(v)=\frac{\partial_v\phi}{\partial_u\phi}(u_b(v),v)
\end{cases}.
\end{equation}
The solution corresponding to a relaxed ball of radius $R>0$,
\begin{equation}
\phi_{\rm rel}(u,v)=-\frac12 (u+v),
\end{equation}
is obtained by taking
\begin{equation}
f(t)=\frac14 t^2, \qquad \qquad u_b(v) = v - 2R.
\end{equation}

We will show that there exist motions near this relaxed state by applying the Implicit Function Theorem. In order to do that, let us define the function spaces
\begin{align}
& X=\{ \varphi \in C^2([-R,R]) : \varphi(R)=0 \} \, ; \nonumber \\
& Y=\{ \psi \in C^2([R,R+\delta]) : \psi(R)=0 \} \, ; \nonumber \\
& Z= C^1([R,R+\delta]) \, ,
\end{align}
equiped with the usual $C^2$ and $C^1$ norms, where $\delta > 0$ will be chosen later. The idea here is that $X$ is the space of possible deviations from the relaxed ball solution in the interval $[-R,R]$, that is, $f(t)=t^2/4 + \varphi(t)$ for $\varphi \in X$  and $t \in [-R,R]$, and $Y$ is the space of possible deviations in the interval $[R,R+\delta]$; we have fixed $f(R)=R^2/4$ to remove the ambiguity in the choice of $f$. The space $Z$ consists of the corresponding values of $n^2/4$ along the free boundary in the interval $[R,R+\delta]$ (see Figure~\ref{sphere}). 

Let us then consider the map $F:X \times Y \to Z$ defined as follows: given $(\varphi,\psi)\in X \times Y$, set
\begin{equation} \label{phimap}
\phi(u,v) = \phi_{\rm rel}(u,v) + \frac{\varphi(u)-\psi(v)}{r(u,v)}
\end{equation}
(defined on $[-R,R] \times [R,R+\delta]$), and let $u_b:[R,R+\delta] \to \bbR$ be the solution of the Cauchy problem
\begin{equation}
\begin{cases}
\displaystyle
\frac{du_b}{dv}(v)=\frac{\partial_v\phi}{\partial_u\phi}(u_b(v),v) \\
u_b(R)=-R
\end{cases}.
\end{equation}
If $(\varphi,\psi)$ is in a sufficiently small neighborhood $U \times V \subset X \times Y$ of the origin then $\grad \phi$ is close to $\frac{\partial}{\partial t}$, and so it is possible to choose $\delta > 0$ such that $u_b(v) \in [-R,R]$ for $v \in [R,R+\delta]$ (see Figure~\ref{sphere}). We then define
\begin{equation}
F(\varphi,\psi)(v) = (\partial_u\phi \, \partial_v\phi)(u_b(v),v).
\end{equation}
Note that $F$ is well defined on $U \times V$ if $\delta$ is sufficiently small, and satisfies
\begin{equation}
F(0,0)(v)=\frac14.
\end{equation}

\begin{figure}[h!]
\begin{center}
\psfrag{t}{$t$}
\psfrag{r}{$r$}
\psfrag{R}{$R$}
\psfrag{u=R}{\!\!$u=R$}
\psfrag{u=-R}{$u=-R$}
\psfrag{v=R}{\!\!\!\!$v=R$}
\psfrag{v=R+e}{$v=R+\delta$}
\psfrag{I}{$I$}
\psfrag{II}{$II$}
\psfrag{III}{$III$}
\psfrag{IV}{\!\!\!\!$IV$}
\epsfxsize=.4\textwidth
\leavevmode
\epsfbox{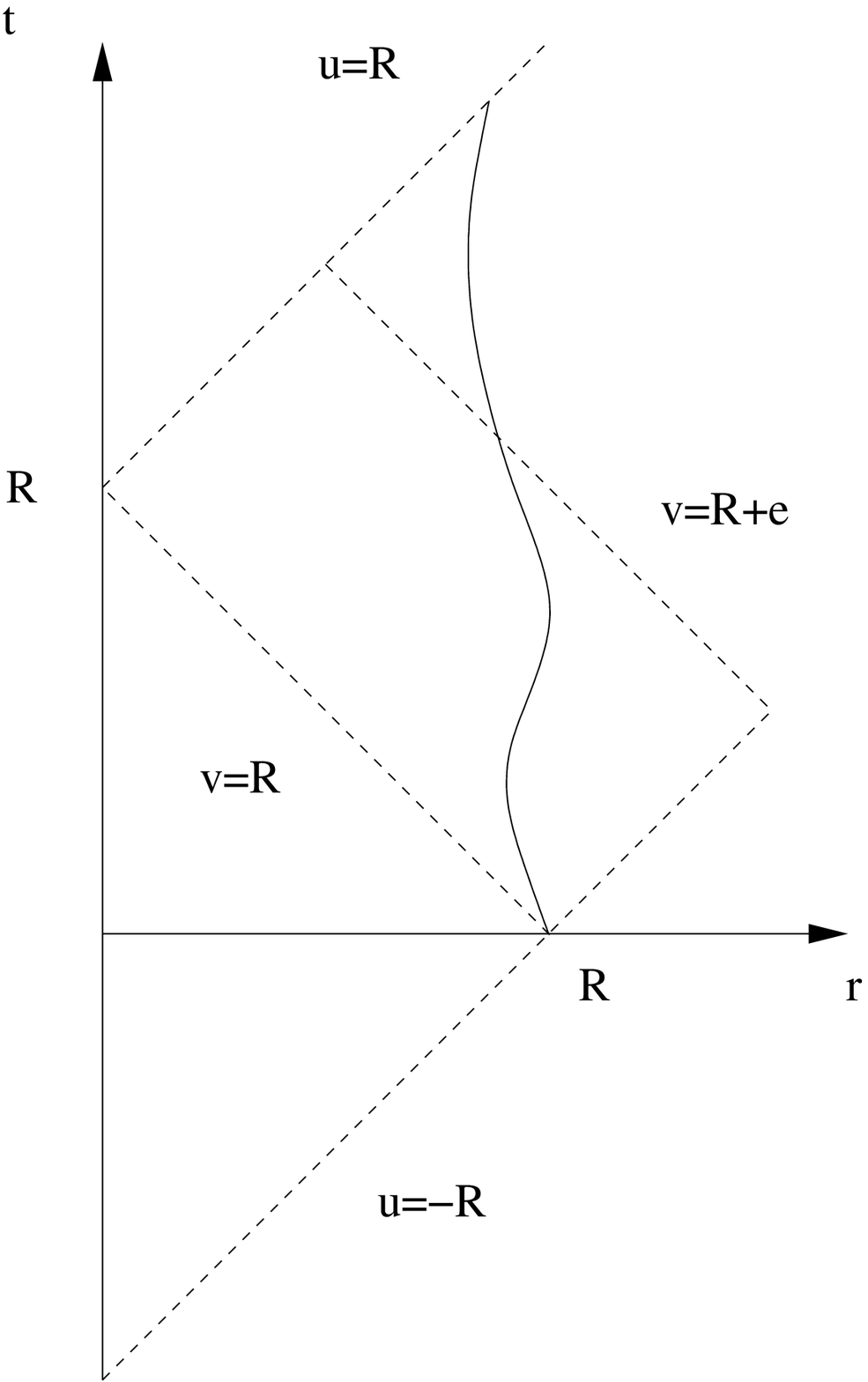}
\hspace{2cm}
\epsfxsize=.4\textwidth
\leavevmode
\epsfbox{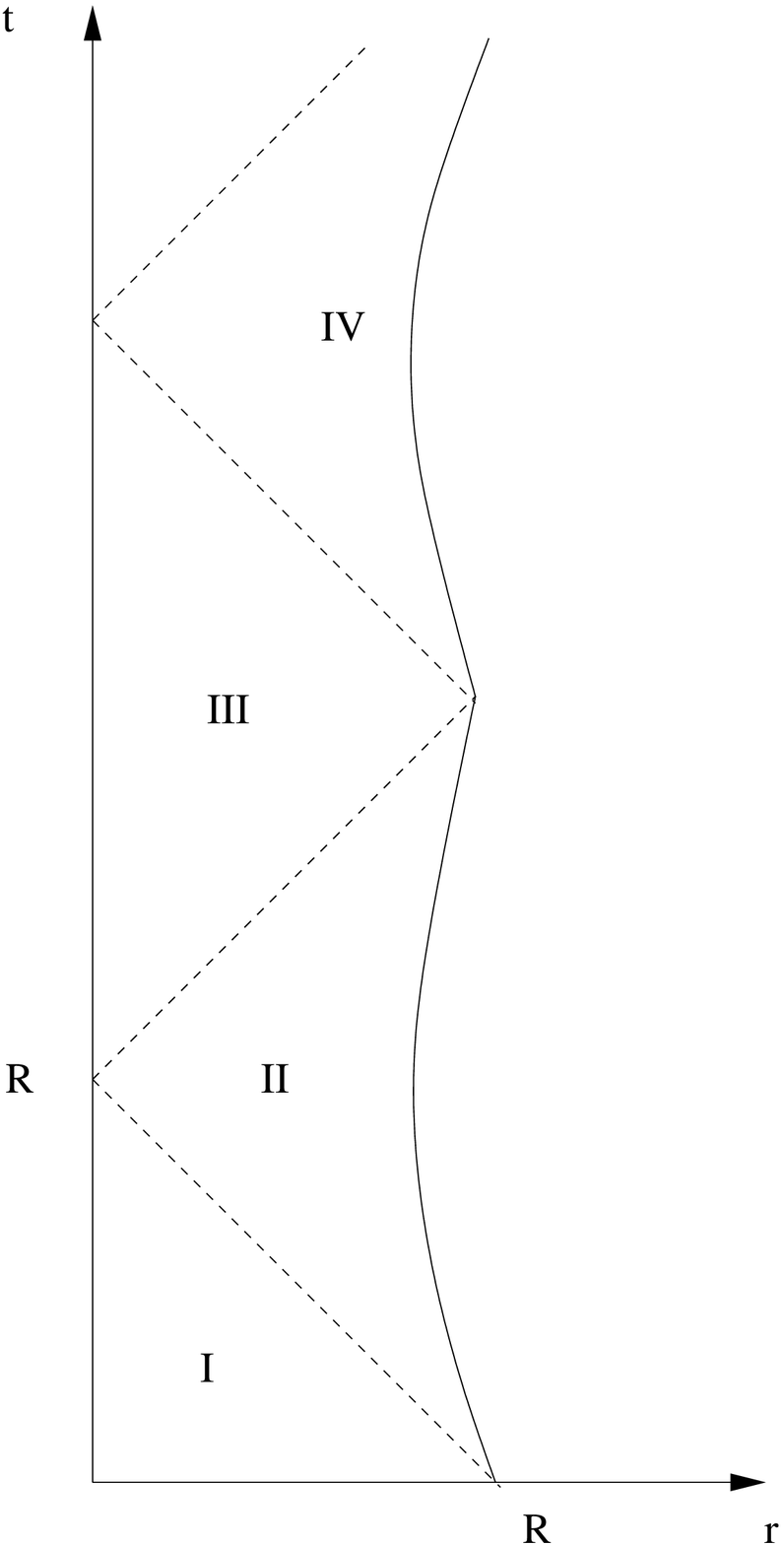}
\end{center}
\caption{Constructing the motion of a rigid elastic sphere.}\label{sphere}
\end{figure}

We linearize $F$ around the origin by computing $F(\varepsilon\varphi, \varepsilon\psi)$ to first order in $\varepsilon$. From
\begin{equation}
\phi(u,v) = - \frac12(u+v) + \frac{\varepsilon\varphi(u)-\varepsilon\psi(v)}{r(u,v)}
\end{equation}
we have
\begin{align}
d\phi = & \left[-\frac12 + \frac{\varepsilon\varphi'(u)}{r} + \frac{\varepsilon\varphi(u)-\varepsilon\psi(v)}{2{r}^2} \right] du \nonumber \\
& + \left[-\frac12 - \frac{\varepsilon\psi'(v)}{r} - \frac{\varepsilon\varphi(u)-\varepsilon\psi(v)}{2{r}^2}\right] dv,
\end{align}
and so
\begin{equation}
F(\varepsilon\varphi, \varepsilon\psi)(v)=\frac14 + \frac{\varepsilon\psi'(v)-\varepsilon\varphi'(v-2R)}{2R}+O(\varepsilon^2).
\end{equation}
In other words,
\begin{equation} \label{84}
(dF(0,0)(\varphi,\psi))(v)= \frac{\psi'(v)-\varphi'(v-2R)}{2R}.
\end{equation}
Note that in particular the kernel of $dF(0,0)(0,\cdot)$ is trivial, since $\psi(R)=0$. On the other hand, \eqref{84} implies that any $\omega \in Z$ is of the form $\omega = dF(0,0)(0,\psi)$ with
\begin{equation}
\psi(v)=2R \int_R^v \omega(s) ds.
\end{equation}
Therefore $dF(0,0)(0,\cdot):Y \to Z$ is an isomorphism. From the Implicit Function Theorem we then conclude that given $\varphi \in U \subset X$ there exists a unique function $\psi \in V \subset Y$ such that $F(\varphi,\psi)=\frac14$ (if the neighborhoods of the origin $U \subset X$ and $V \subset Y$ are taken sufficiently small). 

Recalling Equation~\eqref{phimap}, we then have the following result:

\begin{Thm} \label{freebd}
If $\phi(\cdot,R):[0,R] \to \bbR$ is such that $(r\phi)(\cdot,R)$ is sufficiently close (in the $C^2$ norm) to $(r\phi_{\rm{rel}})(\cdot,R)$ then there exists $\delta > 0$ such that $\phi$ can be continuously extended to a $C^2$ function $\phi:[0,R] \times [R,R+\delta] \to \bbR$ satisfying
\begin{equation}
\Box \phi = 0
\end{equation}
and
\begin{equation}
(\partial_u\phi \, \partial_v\phi)(u_b(v),v) = \frac14,
\end{equation}
where $u_b:[R,R+\delta]\to [-R,R]$ is determined by
\begin{equation}
\begin{cases}
\displaystyle
\frac{du_b}{dv}(v)=\frac{\partial_v\phi}{\partial_u\phi}(u_b(v),v) \\
u_b(R)=-R
\end{cases} 
\end{equation}
and satisfies $u_b(R+\delta) = R$, with $u_b(v) < R$ for $v \in [R,R+\delta]$. 

This extension is unique for $r\phi$ in a sufficiently small $C^2$ neighborhood of $r\phi_{\rm{rel}}$.
\end{Thm}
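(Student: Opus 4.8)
The plan is to recognize Theorem~\ref{freebd} as the concrete reformulation of the Implicit Function Theorem argument just completed. The analytic heart — the linearization \eqref{84} and the fact that $dF(0,0)(0,\cdot):Y\to Z$ is an isomorphism — is already in hand, so the work that remains is to translate between the abstract pair $(\varphi,\psi)$ and the data and conclusions of the statement, and to control the free-boundary curve $u_b$.

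First I would make the dictionary between the prescribed data and $\varphi$ explicit. Writing $f(t)=\tfrac14 t^2+\varphi(t)$ with the normalization $\varphi(R)=0$ (which removes the additive and linear ambiguity in $f$ noted after \eqref{problemuv}), the hypothesis that $(r\phi)(\cdot,R)$ be $C^2$-close to $(r\phi_{\rm rel})(\cdot,R)$ is exactly the statement that the associated $\varphi$ lies in the neighborhood $U\subset X$ produced above. Feeding this $\varphi$ to the Implicit Function Theorem yields a unique $\psi\in V\subset Y$ with $F(\varphi,\psi)=\tfrac14$. I then set $\phi$ equal to \eqref{phimap}: the wave equation $\Box\phi=0$ holds automatically because $\phi$ has the spherically symmetric form $(f(u)-f(v))/r$, and $\phi$ is $C^2$ because $\varphi,\psi$ are $C^2$ and, on the rectangle $[-R,R]\times[R,R+\delta]$, the argument $u$ never leaves $[-R,R]$ while $v$ never leaves $[R,R+\delta]$, so the two branches of $f$ are never differentiated across their junction at $t=R$ (the sole point with $r=0$ is the corner $(u,v)=(R,R)$, at which the focusing estimates recorded at the start of Section~\ref{section5} give continuity, and which is in any case excluded from the fluid region). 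Finally, $F(\varphi,\psi)=\tfrac14$ is precisely the free-boundary condition $(\partial_u\phi\,\partial_v\phi)(u_b(v),v)=\tfrac14$, and the Cauchy problem defining $u_b$ (with $u_b(R)=-R$) is built into $F$ and encodes, via Proposition~\ref{RHfree}, the tangency of $\grad\phi$ to the free boundary.

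The remaining substantive step is the geometry of $u_b$. Because $(\varphi,\psi)$ is small, $\grad\phi$ is $C^1$-close to $\frac{\partial}{\partial t}$; hence $\partial_u\phi$ is bounded away from zero and the slope $\frac{du_b}{dv}=\frac{\partial_v\phi}{\partial_u\phi}$ is close to its relaxed value $1$, so it is positive and bounded below. Thus $u_b$ is strictly increasing, and starting from $u_b(R)=-R$ it reaches the value $R$ at a first time $v=R+\delta$ (with $\delta$ close to its relaxed value $2R$), staying below $R$ for $v\in[R,R+\delta)$; this is the $\delta$ of the statement. Uniqueness of the extension for $r\phi$ near $r\phi_{\rm rel}$ is then inherited verbatim from the local uniqueness clause of the Implicit Function Theorem.

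The main obstacle I anticipate is the interplay between the $\delta$ used a priori to set up the spaces $Y,Z$ and the map $F$, and the $\delta$ determined a posteriori as the first time $u_b$ meets $R$. The clean way around this is to run the Implicit Function Theorem once on a fixed interval $[R,R+\delta_0]$ with $\delta_0$ slightly larger than $2R$ — large enough that the relaxed curve and all sufficiently nearby curves $u_b$ cross $R$ within range while remaining inside $[-R,R]$, so that $F$ only ever evaluates $\varphi$ where it is defined — and then to restrict to the first-hitting subinterval. Verifying that $u_b$ indeed stays in $[-R,R]$ throughout $[R,R+\delta_0]$ again reduces to the strict monotonicity and the closeness of its slope to $1$; everything else is bookkeeping on top of the Implicit Function Theorem.
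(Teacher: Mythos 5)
Your first two paragraphs track the paper's own argument faithfully: the dictionary identifying the hypothesis on $(r\phi)(\cdot,R)$ with smallness of $\varphi$ in $X$, the automatic validity of $\Box\phi=0$ for fields of the form \eqref{phimap}, the observation that on the rectangle the two branches of $f$ are never differentiated across their junction at $t=R$, and the monotonicity of $u_b$ coming from $du_b/dv$ being close to $1$ are all correct and all present (explicitly or implicitly) in the paper.

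The gap is exactly at the point you call the ``clean way around'', and it is fatal as stated. For the map $F$ to make sense on $[R,R+\delta_0]$ one needs $u_b(v)\in[-R,R]$ for \emph{every} $v$ in that interval, because $F(\varphi,\psi)(v)$ involves $\varphi(u_b(v))$ and $\varphi'(u_b(v))$, and elements of $X=\{\varphi\in C^2([-R,R]):\varphi(R)=0\}$ have no values beyond $u=R$. For the relaxed solution $u_b(v)=v-2R$, so the free boundary exits $[-R,R]$ precisely at $v=3R$; hence for $\delta_0>2R$ the map $F$ fails to be defined on any neighborhood of the origin of $X\times Y$ --- the very center of the intended Implicit Function Theorem application. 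Your requirement that nearby curves ``cross $R$ within range while remaining inside $[-R,R]$'' is self-contradictory: crossing $R$ \emph{is} leaving $[-R,R]$. Consequently a single IFT application can only be run with $\delta_0<2R$ (the admissible neighborhood shrinking as $\delta_0\to 2R$), and it then delivers $u_b(R+\delta_0)<R$, strictly short of the conclusion $u_b(R+\delta)=R$. One could try to salvage your scheme by composing with a bounded linear extension operator $C^2([-R,R])\to C^2([-R,3R+\epsilon])$ and then proving that the solution restricted to the first-hitting interval is independent of the chosen extension (using that $F(\varphi,\psi)(v)$ depends only on $\psi|_{[R,v]}$ and on $\varphi$ along $u_b([R,v])$), together with a uniqueness argument valid on subintervals; but none of that appears in your proposal, and it is precisely the nontrivial content. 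The paper resolves the difficulty differently, by iteration: apply the IFT on a short strip $[R,R+\delta_0]$; restart with the data induced on the null line $v=R+\delta_0$; after decreasing $\delta_0$ slightly, use local uniqueness on the overlap to show the two pieces agree near the junction, hence the glued solution is $C^2$ across $v=R+\delta_0$; and use the uniform control of the $C^2$ norm of $r\phi$ to keep a uniform step size, so that finitely many steps carry the free boundary all the way to $u=R$ (if $u_b$ had not yet reached $R$, one could extend again). Your uniqueness claim, being ``inherited verbatim'' from the broken single-step setup, falls with it and must instead be assembled step by step along this iteration.
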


\begin{proof}
The Implicit Function argument gives $\phi$ as above in $[0,R] \times [R,R+\delta_0]$ for some $\delta_0>0$, but with $u_b(R+\delta_0) < R$. Starting with the values of $\phi(u,R+\delta_0)$ for $u\in[0,R]$, we can use the same argument with the initial null hypersurface $v=R$ replaced with $v=R + \delta_0$. By decreasing $\delta_0$ slightly, it is easy to show that $\phi$ is necessarily $C^2$ across $v=R+\delta_0$. Since we have control over the $C^2$ norm of $r\phi$, we can iterate this procedure until the free boundary reaches $u=R$ (otherwise, if $u_b(R+\delta) < R$, we could extend $\phi$ as before).
\end{proof}

In other words, if we are given $\phi$ on the null line separating regions $I$ and $II$ in Figure~\ref{sphere}, and $r\phi$ is sufficiently close to $r\phi_{\rm rel}$ in the $C^2$ norm, then we can solve the free boundary problem in region $II$. Notice that this yields the values of the function $f$ (see \eqref{problemuv}) in the interval $[R,R+\delta]$, and hence the solution of the wave equation in region $III$. Moreover, since we have the values of $\phi(u,R+\delta)$, we can again apply Theorem~\ref{freebd} to solve the free boundary problem in region $IV$, and so on, for any finite number of these regions (say until we reach $t=T$ for some $T>0$). This construction is unique, in a suitable neighborhood of $\phi_{\rm rel}$, in view of Theorem~\ref{freebd} and the continuity across the shocks imposed by the Rankine-Hugoniot conditions. Therefore we have the following result:


\begin{Thm} \label{5.1}
Let $B \subset \bbR^3$ be the closed ball of radius $R>0$, and let $\phi_0 \in C^2(B)$ and $\phi_1 \in C^1(B)$ be spherically symmetric functions. Given $T>0$, there exist neighborhoods of $\phi_{\rm rel}(0,\cdot)$ in $C^2(B)$ and of $\partial_t\phi_{\rm rel}(0,\cdot)$ in $C^1(B)$ such that for $\phi_0$ and $\phi_1$ in such neighborhoods the initial value free boundary problem written in spherical coordinates as
\begin{equation}
\begin{cases}
\Box\phi = 0 \\
\phi(0,r) = \phi_0(r) \\
\partial_t\phi(0,r) = \phi_1(r) \\
\left((\partial_t\phi)^2-(\partial_r\phi)^2\right)(t,r_b(t)) = 1 \\
\displaystyle
\frac{dr_b}{dt}(t)=-\frac{\partial_r\phi}{\partial_t\phi}(t,r_b(t)) \\
r_b(0) = R
\end{cases}
\end{equation}
has a weak solution $\phi$ defined in the set $S$ given in spherical coordinates by the conditions $0 \leq t \leq T$ and $r\leq r_b(t)$, which is unique in a suitable neighborhood of $\phi_{\rm rel}$. This solution is sectionally $C^1$ in $S$ and sectionally $C^2$ for $r > 0$, with $\grad \phi$ timelike and possibly discontinuous along the null hypersurface generated by the ingoing null rays starting at $t=0, r=R$ upon reflection on the free boundary $r=r_b(t)$, which is sectionally $C^2$.
\end{Thm}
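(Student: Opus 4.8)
The plan is to assemble the solution on $S$ from finitely many applications of the local construction of Theorem~\ref{freebd}, after first encoding the initial data in the single generating function $f$. Since $\phi(t,r)=(f(t-r)-f(t+r))/r$, the data translate into $(r\phi_0)(r)=f(-r)-f(r)$ and $(r\phi_1)(r)=f'(-r)-f'(r)$, so that $\phi_0$ determines the odd part of $f$ and $\phi_1$ the derivative of its even part on $[-R,R]$; integrating and removing the residual additive ambiguity by setting $f(R)=R^2/4$ then fixes $f$ on $[-R,R]$. The spherical regularity of the data (in particular $\phi_0'(0)=\phi_1'(0)=0$) ensures that the odd and even extensions have the required smoothness, and one checks that this correspondence and its inverse are continuous for the $C^2$ and $C^2\times C^1$ topologies; hence $f$ is $C^2$-close to $f_{\rm rel}(t)=t^2/4$ exactly when the data are close to the relaxed data. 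This $f$ already determines $\phi$ on region $I$ of Figure~\ref{sphere}, the triangle bounded above by the ingoing characteristic $v=R$ issuing from the corner $(t,r)=(0,R)$.

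Next I would iterate Theorem~\ref{freebd}. The trace of $\phi$ on the null line $v=R$ is the datum needed to solve the free boundary problem in region $II$; Theorem~\ref{freebd} produces the free boundary $u_b$ running from $u=-R$ to $u=R$ and extends $f$ to an interval $[R,R+\delta_1]$, whence $\phi=(f(u)-f(v))/r$ is determined on region $III$. The trace of $\phi$ on the next null line $v=R+\delta_1$ then feeds a further application of Theorem~\ref{freebd}, producing region $IV$, and so on. Each free-boundary step advances the time by an amount of order $R$ (for the relaxed solution the step has $v$-width $2R$), so that only finitely many steps are needed to reach $t=T$, provided each step remains admissible.

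This admissibility is the step I expect to be the main obstacle. Theorem~\ref{freebd} is purely local: both the step width $\delta_k$ and the very applicability of the Implicit Function argument require $r\phi$ to lie in a fixed $C^2$-neighbourhood of $r\phi_{\rm rel}$ on the null line from which the step starts. One must therefore establish (i) a uniform lower bound $\delta_k\ge\delta_\ast>0$, which follows because as long as $\grad\phi$ stays close to $\partial/\partial t$ the free boundary stays nearly vertical and each reflection advances the time by approximately $R$; and (ii) that the $C^2$ norms of the successive traces of $r\phi$ do not leave the working neighbourhood. Since $T$ fixes the number of steps $N\approx T/\delta_\ast$ in advance, and since each step is a continuous operation on its datum, (ii) is secured by a finite bootstrap: one shrinks the initial neighbourhood of $r\phi_{\rm rel}$ so that the at most $N$-fold composition of the (continuous) solution operators keeps every intermediate trace inside the working neighbourhood. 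This is exactly where the hypothesis that the data be sufficiently close to the relaxed state, with the closeness permitted to depend on $T$, enters.

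Finally, for regularity and the location of the shock I would argue as follows. On each region $f$ is $C^2$, so $\phi$ is $C^2$ there and $\grad\phi$ is timelike by continuity from $\phi_{\rm rel}$; the sole loss of one derivative is on the central worldline $r=0$, due to the focusing already discussed, giving sectional $C^1$ regularity in $S$ and sectional $C^2$ for $r>0$. Across every null interface other than the distinguished one the matching is in fact $C^2$: this is precisely the assertion established within the proof of Theorem~\ref{freebd}, that after slightly decreasing the step width $\phi$ extends $C^2$ across the interface. The one exception is the ingoing characteristic $v=R$ through the corner $(0,R)$ together with its successive reflections at the centre and at the free boundary, where the $C^2$ interior datum meets the free boundary condition of Proposition~\ref{RHfree} and the two need not agree at second order; there $\phi$, although continuous by the Rankine-Hugoniot condition, may carry a jump in its transverse second derivatives. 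Uniqueness near $\phi_{\rm rel}$ then follows from the uniqueness in Theorem~\ref{freebd} at each step, the determinism of the wave equation off the shocks, and the continuity of $\phi$ across them imposed by Rankine-Hugoniot.
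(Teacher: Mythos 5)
Your proposal is correct and follows essentially the same route as the paper: encode the initial data in the generating function $f$ on $[-R,R]$, iterate Theorem~\ref{freebd} region by region (as in Figure~\ref{sphere}) until $t=T$, identify the shock with the ingoing characteristic from $(0,R)$ and its reflections, and obtain uniqueness from Theorem~\ref{freebd} together with the continuity across shocks imposed by the Rankine--Hugoniot conditions. The extra details you supply (the explicit data-to-$f$ correspondence, the uniform lower bound on the step width, and the finite bootstrap that keeps every trace in the working neighbourhood, with the required closeness depending on $T$) are points the paper leaves implicit rather than a different method.
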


Let us finish by identifying necessary conditions on the initial data for the formation of shocks. Recall that shocks correspond to null hypersurfaces where $\phi$ is not $C^2$. From \eqref{problemuv} we see that $\phi$ can only fail to be $C^1$ at $v=R$ if $f'(v)$ is discontinuous at $v=R$. In this case, $\partial_v \phi(u,v)$ is also discontinuous at $v=R$, and consequently so is $n^2$. Since by construction $n^2=1$ along the free boundary, the initial data must satisfy $n^2(0,R) \neq 1$, that is, the ball cannot be relaxed at the free boundary initially. Similarly, $\phi$ can only fail to be $C^2$ at $v=R$ if $f''(v)$ is discontinuous at $v=R$. In this case, $\partial^2_v \phi(u,v)$ is also discontinuous at $v=R$, and consequently so is any derivative of $n^2$ in any timelike direction. In particular, the derivative of $n^2$ along the free boundary does not vanish initially. Conversely, if $n^2(0,R) \neq 1$ or if the derivative of $n^2$ along the free boundary does not vanish initially then $\phi$ cannot be $C^2$ and we must have a shock. These conditions can easily be translated into conditions on the initial data (see \eqref{n2initial} and \eqref{dn2initial} below). Since these are open conditions, the formation of shocks is stable. When these shocks are present, they focus at the center of symmetry and are reflected on the free boundary, which fails to be $C^2$, and may in fact fail to be $C^1$ (since the velocity field is not $C^1$ and may fail to be continuous).

\begin{Thm}
Under the hypotheses of Theorem~\ref{5.1}, shocks are present if and only if either 
\begin{equation} \label{n2initial}
n^2(0,R) \equiv  (\phi_1(R))^2 - (\phi_0'(R))^2 \neq 1
\end{equation}
or
\begin{align}
& \left[ \left( - \partial_t \phi \, \partial_t + \partial_r \phi \, \partial_r \right) n^2 \right] (0,R) \equiv \nonumber \\
& \qquad - \phi_1(R) \left[ 2 \phi_1(R) \left( \phi_0''(R) + \frac2R \phi_0'(R)\right) - 2 \phi_0'(R) \phi_1'(R) \right]  \label{dn2initial} \\
& \qquad + \phi_0'(R) \left[ 2 \phi_1(R) \phi_1'(R) - 2 \phi_0'(R) \phi_0''(R) \right] \neq 1. \nonumber
\end{align}
In particular, the formation of shocks is stable.
\end{Thm}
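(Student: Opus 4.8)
The plan is to reduce the presence of a shock to the failure of $\phi$ to be $C^2$ across a single null hypersurface, and then to translate that failure into the two pointwise conditions on the initial data. By Theorem~\ref{5.1} the solution $\phi$ exists, is unique near $\phi_{\rm rel}$, and the only null hypersurface along which it can lose regularity is the one generated by the ingoing ray $v=R$ emanating from $(t,r)=(0,R)$, together with its successive reflections at the centre $r=0$ and at the free boundary. Indeed, the data $\phi_0\in C^2$, $\phi_1\in C^1$ make $\phi$ of class $C^2$ throughout the initial domain of dependence (region $I$ in Figure~\ref{sphere}, away from the focusing point $r=0$), while Theorem~\ref{freebd} produces a $C^2$ extension across every later matching hypersurface $v=R+\delta_0,\ldots$; hence the sole obstruction to $\phi$ being $C^2$ is a mismatch, at $v=R$, between the solution fixed by the initial data and the one fixed by the free-boundary construction. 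Since $\phi(u,v)=(f(u)-f(v))/r(u,v)$, this mismatch is exactly a loss of regularity of the profile $f$ at the argument $v=R$, which then propagates along the entire shock hypersurface. So the task is to show that $f$ fails to be $C^2$ at $v=R$ precisely when \eqref{n2initial} or \eqref{dn2initial} holds.

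First I would compute $f$ on the initial side. Evaluating $\phi$ and $\partial_t\phi$ on the slice $t=0$ (i.e.\ $u=-r$, $v=r$ with $r\in[0,R]$) and combining the resulting relations yields the closed formula
\[
f'(r)=-\tfrac12\left[\phi_0(r)+r\phi_0'(r)+r\phi_1(r)\right], \qquad r\in[0,R],
\]
so that $f\in C^2([0,R])$ (using $\phi_0\in C^2$, $\phi_1\in C^1$). On the other side, the free-boundary construction forces $n^2\equiv 1$ along the boundary, and $(0,R)$ lies on the boundary. Now $n^2=4\,\partial_u\phi\,\partial_v\phi$, and across the characteristic $v=R$ the tangential factor $\partial_u\phi$ is continuous (and nonzero, being close to $-\tfrac12$), while a jump of $f'(v)$ at $v=R$ is exactly a jump of $\partial_v\phi$. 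Hence $f'$ is continuous at $v=R$ if and only if $n^2$ is, i.e.\ if and only if its initial corner value $n^2(0,R)=(\phi_1(R))^2-(\phi_0'(R))^2$ equals the boundary value. This gives the first alternative: $\phi$ fails to be $C^1$, and a shock forms, precisely when \eqref{n2initial} holds.

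Assuming instead that $n^2(0,R)=1$ (so $f'$ is continuous at $v=R$ and $\grad\phi$ is continuous there), the only remaining $C^2$ obstruction is a jump of $f''$, which produces a jump of $\partial_v^2\phi$ alone while $\partial_u^2\phi$ and $\partial_u\partial_v\phi$ stay continuous. This jump is faithfully detected by the boundary-directional derivative of $n^2$, that is, by the operator $\grad\phi=-\partial_t\phi\,\partial_t+\partial_r\phi\,\partial_r$ applied to $n^2$: a short computation shows that the jump of $(\,\grad\phi\,)\,n^2$ across $v=R$ is a nonzero multiple of $[f'']$. The shock criterion is therefore that this directional derivative, computed from the initial data at $(0,R)$, fails to match the value forced on the boundary side by the free-boundary condition $n^2\equiv 1$ --- which is exactly condition \eqref{dn2initial}. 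To produce the explicit right-hand side I would expand $(-\partial_t\phi\,\partial_t+\partial_r\phi\,\partial_r)n^2$ with $n^2=(\partial_t\phi)^2-(\partial_r\phi)^2$, substitute $\partial_t\phi=\phi_1$, $\partial_r\phi=\phi_0'$, $\partial_r^2\phi=\phi_0''$, $\partial_t\partial_r\phi=\phi_1'$ at $(0,R)$, and eliminate $\partial_t^2\phi$ using the spherically symmetric wave equation $\partial_t^2\phi=\partial_r^2\phi+\tfrac2r\partial_r\phi$; this turns the directional derivative into the expression displayed in \eqref{dn2initial}. Combining the two cases gives the stated equivalence.

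Finally, stability is immediate once the equivalence is established: both \eqref{n2initial} and \eqref{dn2initial} are strict inequalities in quantities depending continuously on $(\phi_0,\phi_1)$ in the $C^2\times C^1$ topology, so the set of data producing a shock is open. The step I expect to be the main obstacle is the $C^2$ analysis of the third paragraph: one must argue cleanly that, once $f'$ is continuous, the jump of $f''$ is the \emph{only} second-order obstruction at $v=R$ and that it is measured without loss by the mismatch of the boundary-directional derivative of $n^2$, which requires using both the free-boundary condition (to pin the boundary-side value) and the wave equation (to reduce every second derivative to the data). Establishing that $\phi$ is genuinely $C^2$ away from the boundary characteristic and its reflections, rather than merely sectionally so, is the other point that leans on the full strength of Theorem~\ref{freebd}.
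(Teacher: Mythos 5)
Your proposal is correct and follows essentially the same route as the paper's own argument: identify the null line $v=R$ (and its reflections) as the only possible shock surface, equate shock formation with a jump of $f'$ or of $f''$ at $v=R$, detect the former through the jump of $n^2$ against the boundary condition $n^2\equiv 1$ and the latter through the jump of the $\grad\phi$-directional derivative of $n^2$ (whose boundary-side value vanishes since $\grad\phi$ is tangent to the boundary where $n^2\equiv1$), and then translate everything into the initial data via the wave equation, with stability following from the conditions being open. One remark: your derivation (correctly) produces the threshold $0$, not $1$, on the right-hand side of \eqref{dn2initial} --- this agrees with the paper's own prose, which requires that the derivative of $n^2$ along the free boundary ``does not vanish initially'', so the ``$\neq 1$'' in the displayed statement is evidently a typo that your argument implicitly fixes.
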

%
%
\section*{Acknowledgements}
We thank Moritz Reintjes for discussions in the early stages of this project. This work was partially supported by FCT/Portugal through UID/MAT/04459/2013 and grant (GPSEinstein) PTDC/MAT-ANA/1275/2014.
%
%
\appendix
%
%
\section{Example of a material singularity}\label{appendixA}
In this appendix we show that material singularities may indeed occur in the rigid elastic ball. Consider the smooth spherically symmetric solution of the wave equation
\begin{equation}\label{example}
\phi(t,r)=-\frac{\sin(t)\sin(r)}{r}.
\end{equation}
We have
\begin{equation}
d\phi(t,r)=-\frac{\cos(t)\sin(r)}{r} dt + \frac{\sin(t)(\sin(r) - r \cos(r))}{r^2} dr,
\end{equation}
and so
\begin{equation}
d\phi(0,r)=-\frac{\sin(r)}{r} dt
\end{equation}
is timelike and future-pointing for $r<\pi$. However,
\begin{equation}
d\phi(t,0)= -\cos(t) dt
\end{equation}
becomes null for $t=\frac{\pi}2$. Therefore, if one takes $d\phi(0,r)$ as initial data for an elastic sphere of radius $R \in \left]\frac{\pi}2, \pi\right[$ then a material singularity will form at the center for $t=\frac{\pi}2$. In fact, it forms along the spacelike surface defined by
\begin{equation}
\cotan(t)= \frac1r -\cotan(r), \qquad 0 \leq t \leq \frac{\pi}2, \quad 0 \leq r \leq \pi
\end{equation}
(see Figure~\ref{material_singularity}). Physically, this solution corresponds to starting with a sphere that is progressively more stretched from the center towards the boundary. In the domain of dependence of the initial data, the free boundary does not affect the evolution, and so the outer layers keep stretching the inner layers untill the sphere tears itself apart.

\begin{figure}[h!]
\begin{center}
\includegraphics[scale=.8]{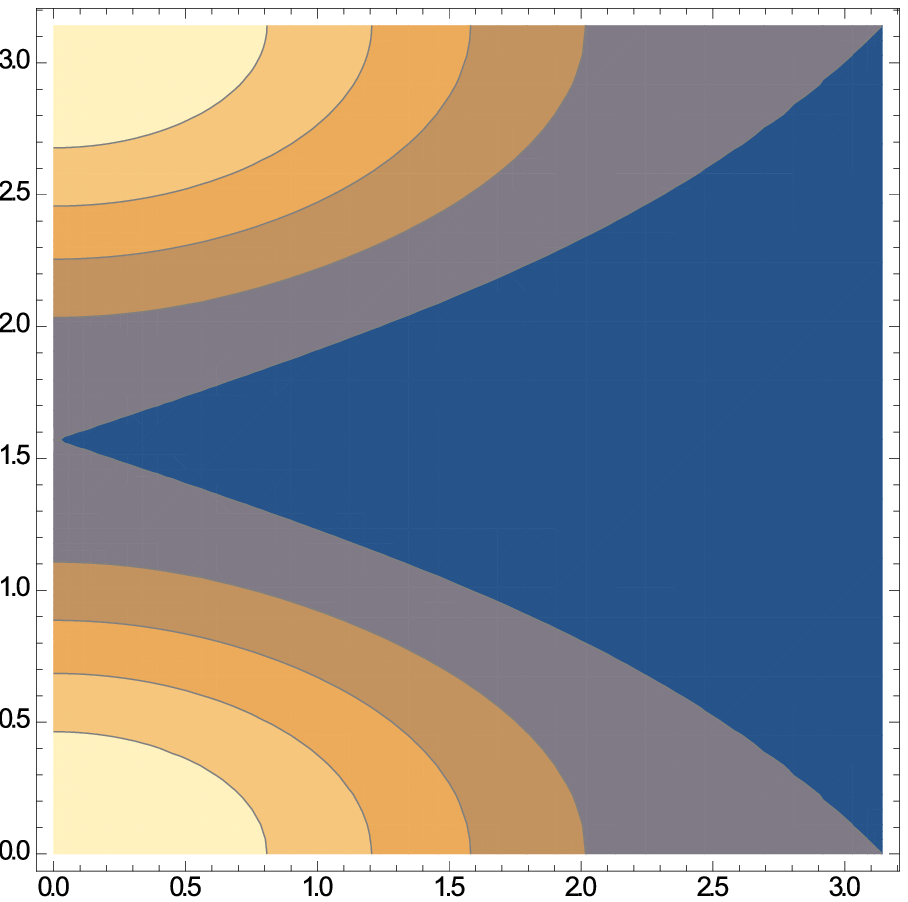}
\includegraphics[scale=.8]{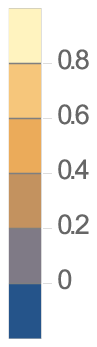}
\end{center}
\caption{Plot of $n^2=-\left\langle d\phi, d\phi \right\rangle$ for $\phi(t,r)$ given in~\eqref{example}. The material singularity forms along the grey-blue transition.}\label{material_singularity}
\end{figure}
%
%
\section{Elastic media}\label{appendixB}
Recall that an elastic medium on a given time-oriented Lorentzian manifold $(M,g)$ is usually modeled by a submersion $\Psi:M \to \bbR^3$ whose level sets are the particle's worldlines (see for instance \cite{BS03, KS03, Wernig06}). The deformation of the medium is then measured by comparing two Riemannian metrics in the hyperplanes orthogonal to the worldlines: the metric $h$ induced by the spacetime metric,
\begin{equation}
h_{\mu\nu} = g_{\mu\nu} + U_\mu U_\nu,
\end{equation}
and the pull-back $\Psi^* \delta$ of the Euclidean metric $\delta$. If the medium is homogeneous and isotropic then the Lagrangian density can only depend on the unordered triple $\{h_1,h_2,h_3\}$ of eigenvalues of $h$ taken with respect to $\Psi^*\delta$.

The Lagrangian density coincides with the energy density measured by the medium's particles, and its choice as a function of $\{h_1,h_2,h_3\}$ is called an {\em elastic law}. Elastic laws that depend only on the number density
\begin{equation}
n^2 = \frac1{h_1 h_2 h_3}
\end{equation}
are usually associated to fluids, since there is no resistance to deformations that preserve the volume. In that sense, one could say that the elastic medium studied in Section~\ref{section5} is just a fluid, although we do allow zero and negative pressures, which are usually ruled out for realistic fluids.

Interestingly, however, spherically symmetric motions of a rigid elastic fluid are actually motions of any elastic body with an elastic law of the form
\begin{equation}
\rho = f(n^2,\sigma^2),
\end{equation}
as long as
\begin{equation}
f(n^2,0) = \frac12 \left( n^2 + 1 \right)
\end{equation}
and the so-called {\em shear scalar} is chosen as
\begin{equation}
\sigma^2 = \left(\frac{h_1}{h_2} - 1\right)^2 \left(\frac{h_2}{h_3} - 1\right)^2 \left(\frac{h_3}{h_1} - 1\right)^2.
\end{equation}
In fact, because we are dealing with a spherically symmetric model, we have to take $\Psi:M \to \bbR^3$ to be a spherically symmetric map, and so $h$ will have two coinciding (tangential) eigenvalues with respect to $\Psi^*\delta$, implying $\sigma^2=0$, that is, $\sigma^2$ does not change under spherically symmetric deformations. Therefore, it is fair to claim that we are indeed considering motions of an elastic medium.
%
%

\end{document}